\newcommand*\samethanks[1][\value{footnote}]{\footnotemark[#1]}
\newcommand*\samethanksagain[1][\value{footnote}]{\footnotemark[#1]}
\title{SALSA: Attacking Lattice Cryptography with Transformers}
\author{Emily Wenger\thanks{Equal contribution, work done while at Meta}  \\ University of Chicago \and {\bf Mingjie Chen}\samethanks \\ University of Birmingham \and {\bf Francois Charton}\thanks{Equal contribution, corresponding author: \texttt{fcharton@fb.com}} \\ Meta AI \and {\bf Kristin Lauter}\samethanksagain \\ Meta AI}
\begin{document}

\maketitle

\begin{abstract}
Currently deployed public-key cryptosystems will be vulnerable to attacks by full-scale quantum computers. Consequently, ``quantum resistant" cryptosystems are in high demand, and lattice-based cryptosystems, based on a hard problem known as Learning With Errors (LWE), have emerged as strong contenders for standardization. In this work, we train transformers to perform modular arithmetic and combine half-trained models with statistical cryptanalysis techniques to propose SALSA: a machine learning attack on LWE-based cryptographic schemes. SALSA can fully recover secrets for small-to-mid size LWE instances with sparse binary secrets, and may scale to attack real-world LWE-based cryptosystems.
\end{abstract}
    
\vspace{-0.3cm}
\section{Introduction}
\label{sec:intro}

\vspace{-0.3cm}

The looming threat of quantum computers has upended the field of cryptography. Public-key cryptographic systems have at their heart a difficult-to-solve math problem that guarantees their security. The security of most current systems (e.g.~\cite{rivest1978method,diffie1976new,miller1985use}) relies on problems such as integer factorization, or the discrete logarithm problem in an abelian group. Unfortunately, these problems are vulnerable to polynomial time quantum attacks on large-scale quantum computers due to Shor’s Algorithm~\cite{shor1994algorithms}. Therefore, the race is on to find new post-quantum cryptosystems (PQC) built upon alternative hard math problems. 

Several schemes selected for standardization in the 5-year NIST PQC competition are {\it lattice-based cryptosystems}, based on the hardness of the Shortest Vector Problem (SVP)~\citep{ajtai1996generating}, which involves finding short vectors in high dimensional lattices.  Many cryptosystems have been proposed based on hard problems which reduce to some version of the SVP, and known attacks are largely based on lattice-basis reduction algorithms which aim to find short vectors via algebraic techniques. 
The LLL algorithm~\cite{LLL} was the original template for lattice reduction, and although it runs in polynomial time (in the dimension of the lattice), it returns an exponentially bad approximation to the shortest vector. It is an active area of research~\cite{CN11_BKZ, MR09,ACJFP15} to fully understand the behavior and running time of a wide range of lattice-basis reduction algorithms, but the best known classical attacks on the PQC candidates run in time exponential in the dimension of the lattice. 

In this paper, we focus on one of the most widely used lattice-based hardness assumptions: {Learning With Errors} (LWE)~\cite{Reg05}. Given a dimension $n$, an integer modulus $q$, and a secret vector ${\bf s} \in \mathbb{Z}^n_q$, the {Learning With Errors problem} is to find the secret given noisy inner products $b := \textbf{a} \cdot \textbf{s}+e \mod q$, with ${\bf a} \in \mathbb{Z}^n_q$ a random vector, and $\bf e$ a small ``error'' sampled from a narrow centered Gaussian distribution (thus the reference to noise).
LWE-based encryption schemes encrypt a message by {\it blinding} it with noisy inner products.


The hardness assumption underlying Learning With Errors is that the addition of noise to the inner products makes the secret hard to discover.  In Machine Learning (ML), we often make the opposite assumption: given enough noisy data, we can still learn patterns from it. In this paper we investigate the possibility of training ML models to recover secrets from LWE samples.

To that end, we propose {\bf SALSA}, {a technique for performing {\bf S}ecret-recovery {\bf A}ttacks on {\bf L}WE via {\bf S}equence to sequence models with {\bf A}ttention}. SALSA trains a language model to predict $b$ from $\bf{a}$, and we develop two algorithms to recover the secret vector $\textbf{s}$ using this trained model.

Our paper has three main {\bf contributions}. We demonstrate that {\bf transformers can perform modular arithmetic} on integers and vectors. 
We show that transformers trained on LWE samples can be used to distinguish LWE instances from random data. 
This can be further turned into {\bf two algorithms that recover binary secrets}.
We show how these techniques yield a practical attack on LWE based cryptosystems and demonstrate its efficacy in the {\bf cryptanalysis of small and mid-size LWE instances with sparse binary secrets}.

\section{Lattice Cryptography and LWE}
\label{sec:lattice}

\begin{figure}[h] 
  \begin{center}
    \includegraphics[width=0.48\textwidth]{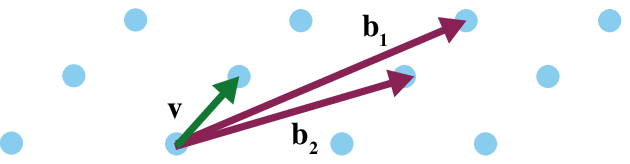}
  \end{center}
  \caption{\small The dots form a lattice $\Lambda$, generated by vectors $\textbf{b}_1,\,\textbf{b}_2$. $\bf v$ is the shortest vector in $\Lambda$.}
    \label{fig:lattice_vis}
\end{figure}

\subsection{Lattices and Hard Lattice Problems}\label{subsec:lattice}
An integer lattice of dimension $n$ over $\mathbb{Z}$ is the set of all integer linear combinations of $n$ linearly independent vectors in $\mathbb{Z}^n$. In other words, given $n$ such vectors $\textbf{v}_i \in \mathbb{Z}^n, i\in \mathbb{N}_n$, we define the lattice $\Lambda(\textbf{v}_1, .. \textbf{v}_n) := \{ \sum_{i=1}^{n}{a_i \textbf{v}_i} \mid a_i \in \mathbb{Z}\}.$ Given a lattice $\Lambda$, the Shortest Vector Problem (SVP) asks for a nonzero vector $\textbf{v} \in \Lambda$ with minimal norm. Figure~\ref{fig:lattice_vis} depicts a solution to this problem in the trivial case of a 2-dimensional lattice, where $\textbf{b}_1$ and $\textbf{b}_2$ generate a lattice $\Lambda$ and $\textbf{v}$ is the shortest vector in $\Lambda$. 

The best known algorithms to find exact solutions to SVP take exponential time and space with respect to $n$, the dimension of the lattice \cite{MV_SVP_exp}. There exist lattice reduction algorithms to find approximate shortest vectors, such as LLL \cite{LLL} (polynomial time, but exponentially bad approximation), or BKZ \cite{CN11_BKZ}. The shortest vector problem and its approximate variants are the hard mathematical problems that serve as the core of lattice-based cryptography. 
\subsection{LWE}\label{subsec:LWE}
\looseness=-1 The  Learning With Errors (LWE) problem, introduced in \cite{Reg05}, is parameterized by a dimension $n$, the number of samples $m$, a modulus $q$ and an error distribution $\chi$ (e.g., the discrete Gaussian distribution) over $\mathbb{Z}_q=\{0,\,1,\ldots,q-1\}$.
Regev showed that LWE is at least as hard as quantumly solving certain hard lattice problems. Later  \cite{Pei09a,LM09_hardness_lwe_exp,BLPRS13_poly_modulus_hardness}, showed LWE to be classically as hard as standard worst-case lattice problems, therefore establishing it as a solid foundation for cryptographic schemes.

\noindent\textbf{LWE and RLWE.}  The LWE distribution $\mathcal{A}_{\textbf{s},\chi}$ consists of pairs 
$(\bf A, \bf b) \it \in \ZZ_q^{m\times n}\times \ZZ_q^n$, where $\bf A$ is a uniformly random matrix in $\ZZ_q^{m\times n}$, $\bf b = \bf A \bf s + \bf e \text{ mod } \it q$, where $\bf s \it \in \ZZ_q^n$ is the secret vector sampled uniformly at random and $\bf e \in \it \ZZ_q^m$ is the error vector sampled from the error distribution $\chi$. We call the pair $(\textbf{A},\textbf{b})$ an LWE sample, yielding $n$ LWE instances: one row of $\textbf{A}$ together with the corresponding entry in $\bf{b}$ is one {\it LWE instance}. 
There is also a ring version of LWE, known as the Ring Learning with Errors (RLWE) problem (described further in Appendix~\ref{subsec:RLWE}).

\para{Search-LWE and Decision-LWE.} We now state the LWE hard problems. The search-LWE problem is to find the secret vector $\bf s$ given $(\bf A, \bf b)$ from $\mathcal{A}_{\textbf{s},\chi}$. The decision-LWE problem is to distinguish $\mathcal{A}_{\textbf{s},\chi}$ from the uniform distribution $\{(\bf A, \bf b) \in \it \ZZ_q^{m\times n}\times \ZZ_q^n$: $\bf A$ and $\bf b$ are chosen uniformly at random)$\}$. \cite{Reg05} provided a reduction from search-LWE to decision-LWE . We give a detailed proof of this reduction in Appendix~\ref{subsec:search_decision} for the case when the secret vector $\bf s$ is binary (i.e. its entries are 0 and 1). In Section~\ref{subsec:secret_guess}, our Distinguisher Secret Recovery method is built on this reduction proof.

\para{(Sparse) Binary secrets.} In LWE based schemes, the secret key vector $\textbf{s}$ can be sampled from various distributions. For efficiency reasons, binary distributions (sampling in $\{0, 1\}^{n}$) and ternary distributions (sampling in $\{-1, 0, 1\}^{n}$) are commonly used, especially in homomorphic encryption ~\cite{HES}. In fact, many implementations use a sparse secret with Hamming weight $h$ (the number of 1's in the binary secret). For instance, HEAAN uses $n = 2^{15},\,q = 2^{628}$, ternary secret and Hamming weight $64$~\cite{Cheon_hybrid_dual}. For more on the use of 
sparse binary secrets in LWE, see \cite{Albrecht2017_sparse_binary,Rachel_Player_sparse}.

\section{Modular Arithmetic with Transformers}
\label{sec:results_1D}
Two key factors make breaking LWE difficult: the presence of error and the use of modular arithmetic. Machine learning (ML) models tend to be robust to noise in their training data. In the absence of a modulus, recovering $\bf s$ from observations of $\bf a$ and $b = {\bf a} \cdot  {\bf s}+e$ merely requires linear regression, an easy task for ML.
Once a modulus is introduced, attacking LWE requires performing linear regression on an n-dimensional torus, a much harder problem. 

Modular arithmetic therefore appears to be a significant challenge for an ML-based attack on LWE. Previous research has concluded that modular arithmetic is difficult for ML models~\citep{palamasinvestigating}, and that transformers struggle with basic arithmetic~\citep{nogueira2021investigating}. However,~\cite{charton2021linear} showed that transformers can compute matrix-vector products, the basic operation in LWE, with high accuracy. As a first step towards attacking LWE, we investigate whether these results can be extended to the modular case. 

We begin with the one-dimensional case, training models to predict $b = a\cdot s \mod q$ from $a$, for some fixed unknown value of $s$, when $a,$ $s \in \mathbb{Z}_q$. This is a form of modular inversion since the model must implicitly learn the secret $s$ in order to predict the correct output $b$. We then investigate the $n$-dimensional case, with ${\bf a} \in \mathbb{Z}_q^n$ and $\bf s$ either in $\mathbb{Z}_q^n$ or in $\{0,1\}^n$ (binary secret). In the binary case, this becomes a (modular) subset sum problem.


\subsection{Methods}
\label{sec:1D_method}


{\bf Data Generation.} We generate training data by fixing the modulus $q$ (a prime with $15 \leq \ceil*{\log_2(q)} \leq 30$, see  the Appendix~\ref{sec:appx_addl_1D}), the dimension $n$, and the secret ${\bf s} \in \mathbb{Z}_q^n$ (or $\{0,1\}^n$ in the binary case). We then sample ${\bf a}$ uniformly in $\mathbb{Z}_q^n$ and compute $b={\bf a} \cdot {\bf s} \mod q$, to create data pair $({\bf a},b)$.

\looseness=-1 {\bf Encoding.} Integers are encoded in base B (usually, B=81), as a sequence of digits in $\{0, \dots B-1\}$. For instance, $(a,b) = (16,3)$ is represented as the sequences \texttt{[1,0,0,0,0]} and \texttt{[1,1]} in base $2$, or \texttt{[2,2]} and \texttt{[3]} in base $7$. In the multi-dimensional case, a special token separates the coordinates of $\bf{a}$. 

{\bf Model Training.} The model is trained to predict $b$ from $\bf a$, for an unknown but fixed value of $\bf s$. We use sequence-to-sequence transformers~\citep{transformer17} with one layer in the encoder and decoder, $512$ dimensions and $8$ attention heads. We minimize a cross-entropy loss, and use the Adam optimizer~\citep{kingma2014adam} with a learning rate of $5 \times 10^{-5}$. 
At epoch end ($300 000$ examples), model accuracy is evaluated over a test set of $10 000$ examples. We train until test accuracy is $95\%$ or loss plateaus for $60$ epochs.

\begin{figure*}[h]
\centering
    \begin{tabular}{cccccccccc} \toprule
    \multicolumn{1}{l}{\multirow{2}{*}{$\boldsymbol{\ceil{\log_2(q)}}$}} & \multicolumn{9}{c}{\textbf{Base}} \\
    \multicolumn{1}{l}{} & 2 & 3 & 5 & 7 & 24 & 27 & 30 & {\bf 81} & 128 \\ \midrule
    15 & 23 & 21 & 23 & 22 & 20 & 23 & 22 & \textbf{20} & 20 \\
    16 & 24 & 22 & 22 & 22 & 22 & 22 & 22 & {\bf 22} & 21 \\
    17 & - & 23 & 25 & 22 & 23 & 24 & 22 & {\bf 22} & 22 \\
    18 & - & 23 & 25 & 23 & 23 & 24 & 25 & {\bf 22} & 22 \\
    19 & - & 23 & - & 25 & 25 & 24 & - & {\bf 25} & 24 \\
    20 & - & - & - & - & 24 & 25 & 24 & {\bf 24} & 25 \\
    21 & - & 24 & - & 25 & - & - & - & - & 25 \\
    22 & - & - & - & - & - & 25 & - & - & 25 \\
    23 & - & - & - & - & - & - & - & - & - \\ \bottomrule
    \end{tabular}
    \captionof{table}{\small \textbf{Size of the training sets required for learning modular inversion.}  Base-2 logarithm of the number of examples needed to reach $95\%$ accuracy, for different values of $\ceil{\log_2(q)}$ and bases. '-' means $95\%$ accuracy not attained after $90$ million examples.}
    \label{tab:1Dspeed}
    \vspace{-0.3cm}
\end{figure*}

\begin{figure*}[h]
    \centering
    \includegraphics[width=0.7\linewidth]{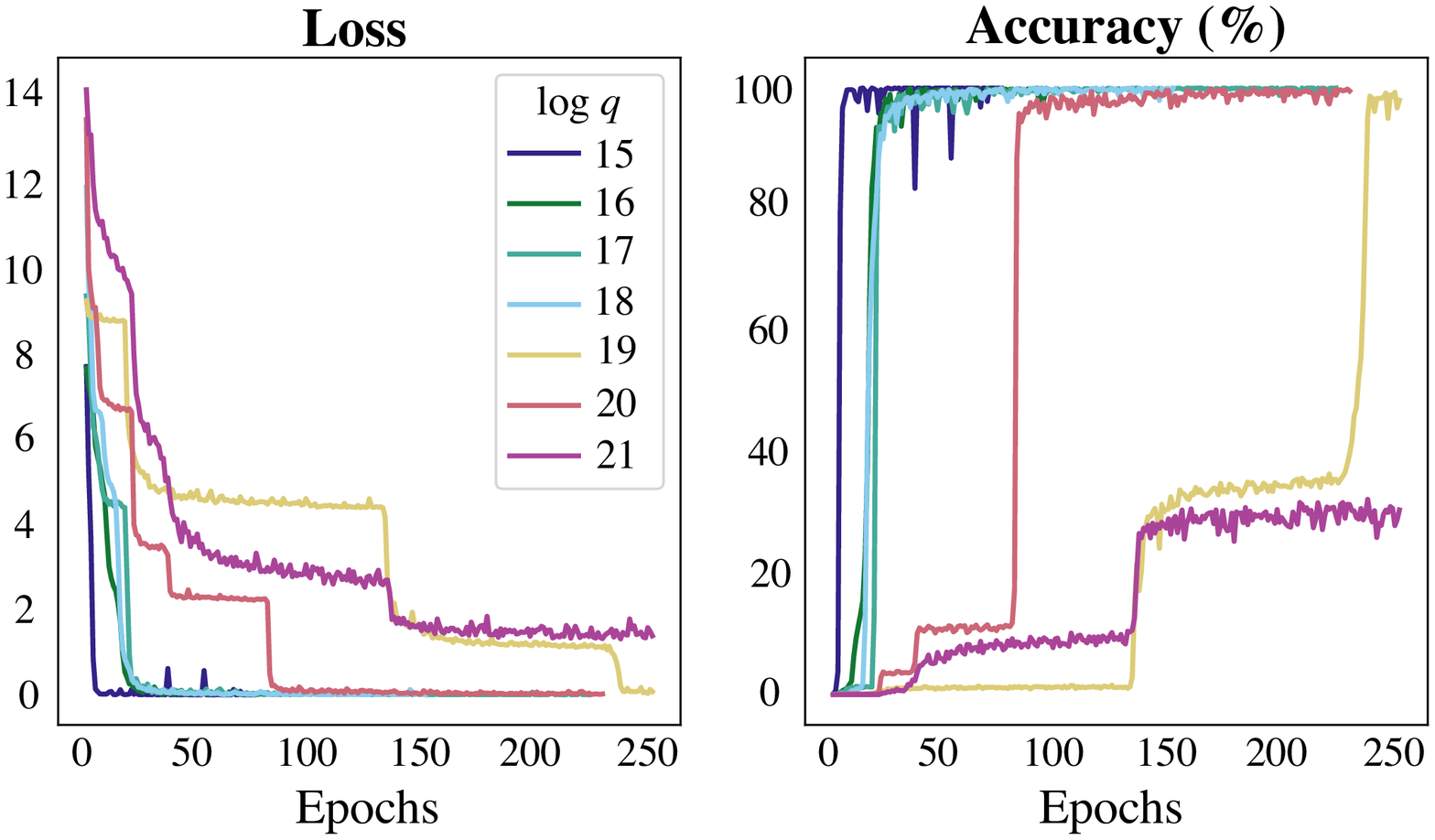}
    \caption{\small \textbf{Learning modular multiplication for various moduli}. {Test loss and accuracy for $q$ with $\ceil*{\log_2(q)}$ from 15 to 21. 300,000 training examples/epoch. One layer transformers with 512 dimensions, 8 attention heads, integers encoded in base 81.}}
    \vspace{-0.5cm}
\label{fig:curves}
\end{figure*}

\vspace{-0.25cm}
\subsection{Results}
\label{sec:modular_results}
\vspace{-0.25cm}

{\bf One-Dimensional.} For a fixed secret $s$, modular multiplication is a function from $\mathbb{Z}_q$ into itself, that can be learned by memorizing $q$ values. Our models learn modular multiplication with high accuracy for values of $q$ such that $\ceil*{\log_2(q)} \leq 22$. Figure~\ref{fig:curves} presents learning curves for different values of $log_2(q)$. The loss and accuracy curves have a characteristic step shape, observed in many of our experiments, which suggests that ``easier cases'' (small values of $\floor*{as/q}$) are learned first. 

The speed of learning and the training set size needed to reach high accuracy depend on the problem difficulty, i.e. the value of $q$. Table~\ref{tab:1Dspeed} presents the $\ceil{\log_2}$ of the number of examples needed to reach $95\%$ accuracy for different values of $\ceil{\log_2(q)}$ and base $B$. Since transformers learn from scratch, without prior knowledge of numbers and moduli, this procedure is not data-efficient. The number of examples needed to learn modular multiplication is between $10q$ and $50q$. Yet, these experiments prove that transformers can solve the modular inversion problem in prime fields. 

Table~\ref{tab:1Dspeed} illustrates an interesting point: learning difficulty depends on the base used to represent integers. For instance, base 2 and 5 allow the model to learn up to $\ceil*{\log_2(q)}=17$ and $18$, whereas base 3 and 7 can reach $\ceil*{\log_2(q)} = 21$. Larger bases, especially powers of small primes, enable faster learning. The relation between representation base and learning difficulty is difficult to explain from a number theoretic standpoint. Additional experiments are in Appendix~\ref{sec:appx_addl_1D}.

{\bf Multidimensional random integer secrets.} In the $n$-dimensional case, the model must learn the modular dot product between vectors $\bf a$ and $\bf s$ in $\ZZ_q^n$. The proves to be a much harder problem. For $n=2$, with the same settings, small values of $q$ ($251$, $367$ and $967$) can be learned with over $90\%$ accuracy, and $q=1471$ with $30\%$ accuracy. In larger dimension, all models fail to learn with parameters tried so far. Increasing model depth to 2 or 4 layers, or dimension to 1024 or 2048 and attention heads to 12 and 16, improves data efficiency (less training samples are needed), but does not scale to larger values of $q$ or $n>2$.

{\bf Multidimensional binary secrets.} Binary secrets make $n$-dimensional problems easier to learn. For $n=4$, our models solve problems with $\ceil{\log_2(q)}\leq 29$ with more than $99.5\%$ accuracy. For $n=6$ and $8$, we solve cases $\ceil{\log_2(q)} \leq 22$ with more than $85\%$ accuracy. But we did not achieve high accuracy for larger values of $n$. So in the next section, we introduce techniques for recovering secrets from a partially trained transformer. We then show that these additional techniques allow recovery of sparse binary secrets for LWE instances with $30 \le n \le 128$ (so far).

\vspace{-0.25cm}

\section{Introducing SALSA: LWE Cryptanalysis with Transformers}
\label{sec:attacks}

\vspace{-0.2cm}
Having established that transformers can perform integer modular arithmetic, we leverage this result to propose SALSA, a method for {\bf S}ecret-recovery {\bf A}ttacks on {\bf L}WE via {\bf S}eq2Seq models with {\bf A}ttention. 

\vspace{-0.25cm}
\subsection{SALSA Ingredients} 
\vspace{-0.25cm}

\looseness=-1 SALSA has three modules: a {\bf transformer model} $\mathcal{M}$, a {\bf secret recovery algorithm}, and a {\bf secret verification procedure}. We assume that SALSA has access to a number of LWE instances in dimension $n$ that use the same secret, i.e. pairs $({\bf a}, b)$ such that $b = {\bf a} \cdot {\bf s} + e \mod q$, with $e$ an error from a centered distribution with small standard deviation. SALSA runs in three steps. First, it uses LWE data to train $\mathcal{M}$ to predict $b$ given $\bf a$.
Next SALSA runs a secret recovery algorithm. It feeds $\mathcal{M}$ special values of $\bf a$, and uses the output $\tilde{b}=\mathcal{M}({\bf a})$ to predict the secret. 
Finally, SALSA evaluates the guesses $\tilde{\bf s}$ by verifying that residuals $r = b - {\bf a} \cdot \tilde{\bf s} \mod q$ computed from LWE samples have small standard deviation. If so, $\textbf{s}$ is recovered and SALSA stops. If not, SALSA returns to step 1, and iterates.

\vspace{-0.2cm}
\subsection{Model Training} 
\label{subsec:model_train}
\vspace{-0.2cm}
SALSA uses LWE instances to train a model that predicts $b$ from ${\bf a}$ by minimizing the cross-entropy between the model prediction $b'$ and $b$. The model architecture is a universal transformer \citep{dehghani2018universal}, in which a shared transformer layer is iterated several times (the output from one iteration is the input to the next). Our base model has two encoder layers, with 1024 dimensions and 32 attention heads, the second layer iterated 2 times, and two decoder layers with 512 dimensions and 8 heads, the second layer iterated 8 times. To limit computation in the shared layer, we use the copy-gate mechanism from \cite{csordas2021neural}. Models are trained using the Adam optimizer with $lr=10^{-5}$ and $8000$ warmup steps.

For inference, we use a beam search with depth $1$ (greedy decoding)~\citep{koehn2004pharaoh, sutskever2014sequence}.
At the end of each epoch, we compute model accuracy over a test set of LWE samples. Because of the error added when computing $b = {\bf a} \cdot {\bf s} +e$, exact prediction of $b$ is not possible. Therefore, we calculate {\it accuracy within tolerance $\tau$ ($acc_{\tau}$)}: the proportion of predictions $\tilde{b} =\mathcal{M}({\bf a})$ that fall within $\tau q$ of $b$, i.e. such that $\|b - \tilde{b}\| \le \tau q$. In practice we set $\tau=0.1$.

\vspace{-0.2cm}
\subsection{Secret Recovery}\label{subsec:secret_guess}
\vspace{-0.2cm}
We propose two algorithms for recovering $\bf s$: direct recovery from special values of $\bf a$, and distinguisher recovery using the binary search to decision reduction (Appendix~\ref{subsec:search_decision}). For theoretical justification of these, see Appendix~\ref{sec:appx_secret_guess}.

{\bf Direct Secret Recovery.} 

\begin{wrapfigure}{r}{0.5\textwidth}
    \begin{minipage}{0.5\textwidth}
    \vspace{-1.8cm}
    \begin{algorithm}[H]
      \caption{Direct Secret Recovery}
      \label{alg:direct_secret}
        \begin{algorithmic}[1]
          \STATE {\bfseries Input:} $\mathcal{M}, K, n$
          \STATE {\bfseries Output:} secret $s$
          \STATE $p = \mathbf{0}^n$
          \FOR{$i = 1, \dots, n$ }
          \STATE $a = \mathbf{0}^n; \; a_i = K$
          \STATE $p_i = \mathcal{M}(a)$
          \ENDFOR
          \STATE {\bfseries Return:} $s = binarize(p)$
        \end{algorithmic}
    \end{algorithm}
    \vspace{1.3cm}
    \end{minipage}
\end{wrapfigure}

The first technique, based on the LWE search problem, is analogous to a chosen plaintext attack.
For each index $i=1, \dots n$, a guess of the $i$-th coordinate of $\bf s$ is made by feeding model $\mathcal{M}$ the special value ${\bf a_i} = K {\bf e_i}$ (all coordinates of ${\bf e_i}$ are $0$ except the $i$-th), with $K$ a large integer. 
If $s_i = 0$, and the model $\mathcal{M}$ correctly approximates $b_i = {\bf a}_i\cdot {\bf s} + e$ from $\bf{a}_i$, then we expect $\tilde{b}_i := \mathcal{M}({\bf a}_i)$ to be a small integer; likewise if $s_i = 1$ we expect a large integer.   
This technique is formalized in  Algorithm~\ref{alg:direct_secret}. {The $binarize$ function in line $7$ is explained in Appendix~\ref{sec:appx_secret_guess}.} In SALSA, we run direct recovery with $10$ different $K$ values to yield $10$ $\bf s$ guesses.


{\bf Distinguisher Secret Recovery.}

\begin{wrapfigure}{r}{0.5\textwidth}
    \begin{minipage}{0.5\textwidth}
    \vspace{-4.35cm}
  \begin{algorithm}[H]
  \caption{Distinguisher Secret Recovery}
  \label{alg:distinguisher}
\begin{algorithmic}[1]
  \STATE {\bfseries Input:} $\mathcal{M}, n, q, acc_{\tau}, \tau$
  \STATE {\bfseries Output:} secret $s$
  \STATE $s = \mathbf{0}^n$
  \STATE $advantage, bound = acc_{\tau} - 2\cdot \tau,\tau \cdot q $
  \STATE $t = min\{50,\frac{2}{advantage^2}\}$
  \STATE $\mathbf{A_{LWE}}, \mathbf{B_{LWE}}$ = $LWESamples(t, n, q)$
  \FOR{$i = 1, \dots n$ }
    \STATE $\mathbf{A_{unif}} \sim \mathcal{U}\{0,q-1\}^{n \times t}$
    \STATE $\mathbf{B_{unif}} \sim \mathcal{U}\{0,q-1\}^{t}$
    \STATE $\mathbf{c} \sim \mathcal{U}\{0,q-1\}^t$ 
    \STATE $\mathbf{A'_{LWE}} = \mathbf{A_{LWE}}$
    \STATE $\mathbf{A'_{LWE}}[:,i] = (\mathbf{A_{LWE}}[:,i] + \mathbf{c}) \mod q$
    \STATE $\widetilde{\mathbf{B_{LWE}}} = \mathcal{M}(\mathbf{A'_{LWE}})$
    \STATE $\widetilde{\mathbf{B_{unif}}} = \mathcal{M}(\mathbf{A_{unif}})$
    \STATE $dl = |\widetilde{\mathbf{B_{LWE}}} - \mathbf{ B_{LWE}}|$
    \STATE $du = |\widetilde{\mathbf{B_{unif}}} - \mathbf{B_{unif}}|$
    \STATE $c_{LWE} = \#\{j \mid dl_j < bound , j\in \mathbb{N}_t\}$
    \STATE $c_{unif} = \#\{j \mid du_j < bound , j\in \mathbb{N}_t\}$
    \IF{($c_{LWE}-c_{unif}) \le advantage\cdot t/2$}
        \STATE $s_i = 1$
    \ENDIF
  \ENDFOR
  \STATE {\bfseries Return:} $s$
\end{algorithmic}
\end{algorithm}
\vspace{-1.2cm}
\end{minipage}
\end{wrapfigure}

The second algorithm for secret recovery is based on the decision-LWE problem. It uses the output of $\mathcal{M}$ to determine if LWE data $({\bf a},\,b)$ can be distinguished from randomly generated pairs $({\bf a_{r}}, b_{r})$. 
The algorithm for distinguisher-based secret recovery is shown in Algorithm~\ref{alg:distinguisher}.
At a high level, the algorithm works as follows. Suppose we have $t$ LWE instances $({\bf a},\,b)$ and $t$ random instances $({\bf a_{r}}, b_{r})$. For each secret coordinate $s_i$, we 
transform the $\bf a$ into $a'_i = a_i + c$, with $c \in \mathbb{Z}_q$ random integers. We then use model $\mathcal{M}$ to compute $\mathcal{M}({\bf a'})$ and $\mathcal{M}({\bf a_r})$. If the model has learned $\bf s$ and the $i^{th}$ bit of $\bf s$ is $0$, then $\mathcal{M}({\bf a'})$ should be significantly closer to $b$ than $\mathcal{M}({\bf a_r})$ is to $b_r$. Iterating on $i$ allows us to recover the secret bit by bit. SALSA runs the distinguisher recovery algorithm when model $acc_{\tau=0.1}$ is above $30\%$. This is the theoretical limit for this approach to work.


\subsection{Secret Verification.}\label{sec:check_secrets}

At the end of the recovery step, we have $10$ or $11$ guesses $\tilde{\bf s}$ (depending on whether the distinguisher recovery algorithm was run). To verify them, we compute the residuals $r = {\bf a} \cdot \tilde{\bf s} - b \mod q$ for a set of LWE samples $({\bf a}, b)$. If $\bf s$ is correctly guessed, we have $\tilde{\bf s}={\bf s}$, so $r = {\bf a} \cdot {\bf s} - b = e \mod q$ will be distributed as the error $e$, with small standard deviation $\sigma$. If $\tilde{\bf s} \ne {\bf s}$, $r$ will be (approximately) uniformly distributed over $\mathbb{Z}_q$ (because ${\bf a } \cdot \tilde {\bf s}$ and $b$ are uniformly distributed over $\mathbb{Z}_q$), and will have standard deviation close to $q/\sqrt{12}$.  Therefore, we can verify if $\tilde{\bf s}$ is correct by calculating the standard deviation of the residuals: if it is close to $\sigma$, the standard deviation of error, the secret was recovered.
In the case that $\sigma=3$ and $q=251$,  the standard deviation of $r$ is around $3$ if $\tilde{\bf s} = {\bf s}$, and $72.5$ if not.

\vspace{0.1cm}
\section{SALSA Evaluation}
\label{sec:lwe_results}

In this section, we present our experiments with SALSA. We generate datasets for LWE problems of different sizes, defined by the dimension and the sparsity of the binary secret. 
We use gated universal transformers, with two layers in the encoder and decoder. Default dimensions and attention heads in the encoder and decoder are 1024/512 and 16/4, but we vary them as we scale the problems. Models are trained on two NVIDIA Volta 32GB GPUs on an internal FAIR cluster. 

\subsection{Data generation}
We randomly generate LWE data for SALSA training/evaluation given the following parameters: dimension $n$, secret density $d$, modulus $q$, encoding base $B$, binary secret $s$, and error distribution $\chi$. For all experiments in this section, we use $q=251$ and $B=81$ (see \S\ref{sec:1D_method}), fix the error distribution $\chi$ to be a discrete Gaussian with $\mu=0, \sigma=3$ ~\cite{HES}, and randomly generate a binary secret $s$. 

We vary the problem size $n$ (the LWE dimension) and the density $d$ (the proportion of ones in the secret) to test our attack success and to observe how it scales.  For problem size, we experiment with $n=30$ to $n=128$. For density, we experiment with $0.002 \leq d \leq 0.15$. For a given $n$, we select $d$ so that the Hamming weight of the binary secret ($h=dn$), is larger than $2$. Appendix~\ref{sec:appx_2D_data} contains an ablation study of data parameters. 
We generate data using the RLWE variant of LWE, described in Appendix \ref{sec:appx_lwe}. For RLWE problems, each ${\bf a}$ is one line of a circulant matrix generated from an initial vector $\in \mathbb{Z}_q^n$. RLWE problems exhibit more structure than traditional LWE due to the use of the circulant matrix, which may help our models learn. 

\para{Note on RLWE parameter choices.} The choices of $n$, $q$, ring $R$, and error distribution determine the hardness of a given RLWE problem. In particular, prior work~\cite{eisentrager2014weak, elias2015provably, CLS1, CLS2} showed that many choices of polynomials defining the number ring $R$ are provably weak when $n$ is not a power of $2$. Thus, we first evaluate SALSA's success against RLWE for cyclotomic rings with dimension $n = 2^\ell$, $\ell \in \{5,6,7\}$ (see Table~\ref{tab:N_density_arch}). However, to help understand how SALSA scales with $n$, we also provide performance evaluations for $n$ values that are not powers of $2$, even though these RLWE settings may be subject to algebraic attacks more efficient than SALSA.

\begin{figure*}[h]
\centering
\begin{tabular}{cccc} \toprule
\textbf{\begin{tabular}[c]{@{}c@{}}Dim.\\ $n$\end{tabular}} & \textbf{\begin{tabular}[c]{@{}c@{}}Density\\ $d$\end{tabular}}  & \textbf{\begin{tabular}[c]{@{}c@{}}$log_2$ \\ samples\end{tabular}} & \textbf{\begin{tabular}[c]{@{}c@{}}Runtime \\ (hours)\end{tabular}} \\ \midrule
\multirow{2}{*}{30} & 0.1  & 20.93 & 1.2\\
 & 0.13 & 23.84 & 12.9 \\ \midrule
32 & 0.09 & 20.93 & 1.2 \\ \midrule
\multirow{2}{*}{50} & 0.06  & 22.25 & 4.7\\
 & 0.08 & 25.67 & 49.9 \\ \midrule
64 & 0.05 &  22.39 & 8 \\
70 & 0.04 & 22.74 & 11.9 \\ 
90 & 0.03 & 23.93 & 43.4 \\ 
110 & 0.03 & 24.07 & 68.8 \\ 
128 & 0.02 & 22.25 & 46.0 \\ \bottomrule
\end{tabular}
\captionof{table}{\small {\bf Full secret recovery.} {Highest density values at which the secret was recovered for each $n$, $q=251$. The model has 1024/512 dimension, 16/4 attention heads. (For $n = 128$, 1536/512 dimension, 32/4 attention heads).}}
\label{tab:N_density_arch}

\end{figure*}
\begin{figure*}[h]
    \centering
    \includegraphics[width=0.6\textwidth]{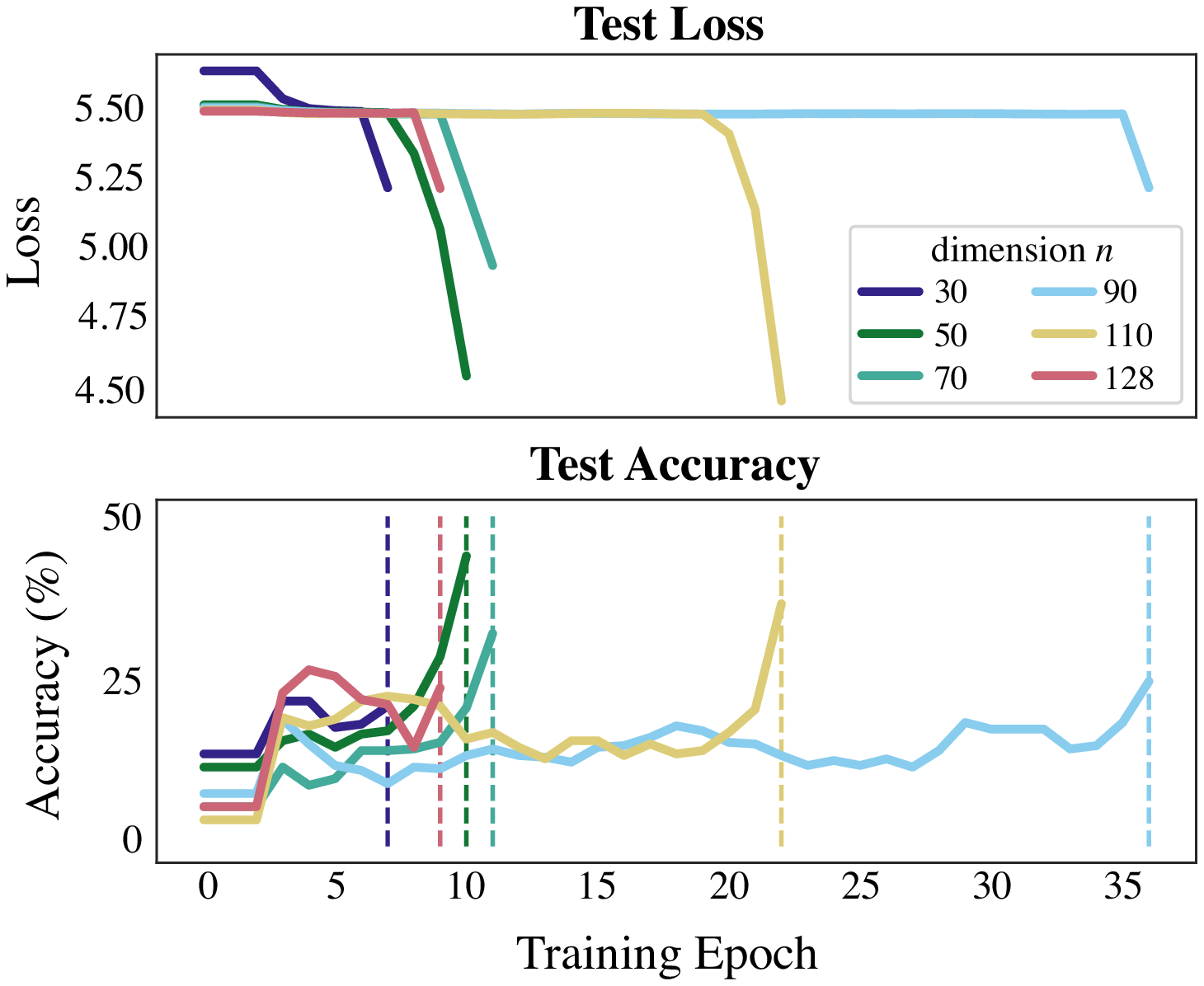}
    \vspace{-0.4cm}
    \captionof{figure}{\small {\bf Full secret recovery}: {Curves for loss and $acc_{\tau}=0.1$, for varying $n$ with Hamming weight $3$. For $n < 100$, model has 1024/512 embedding, 16/4 attention heads. For $n \ge 100$, model has 1536/512 embedding, 32/4 attention heads.}}
    \label{fig:fix_ham_vary_N}
\end{figure*}

\subsection{Results} 
\label{subsec:salsa_results}

Table~\ref{tab:N_density_arch} presents problem sizes $n$ and densities $d$ for which secrets can be fully recovered, together with the time and the logarithm of the number of training samples needed. SALSA can recover binary secrets with Hamming weight $3$ for dimensions up to $128$ ($2^6$). Secrets with Hamming weight 4 can be recovered for $n < 70$. 

For a fixed Hamming weight, the time needed to recover the secret increases with $n$, partly because the length of the input sequence fed into the model is proportional to $n$. On the other hand, the number of samples needed remains stable as $n$ grows. This observation is significant, because all the data used for training the model must be collected (e.g. via eavesdropping), making sample size an important metric. For a given $n$, scaling to higher densities requires more time and data, and could not be achieved with the architecture we use for $n>50$. As $n$ grows, larger models are needed: our standard architecture, with $1024/512$ dimensions and $16/4$ attention heads (encoder/decoder) was sufficient for $n \le 90$. For $n>90$, we needed $1536/512$ dimensions and $32/4$ attention heads. 

Figure~\ref{fig:fix_ham_vary_N} illustrates model behavior during training. After an initial burn-in period, the loss curve (top graph) plateaus until the model begins learning the secret. Once loss starts decreasing, model accuracy with $0.1q$ tolerance (bottom graph) increases sharply. Full secret recovery (vertical lines in the bottom graph) happens shortly after, often within one or two epochs. Direct secret recovery accounts for $55\%$ of recoveries, while the distinguisher only accounts for $18\%$  of recoveries (see Appendix~\ref{sec:sec_histogram}). $27\%$ of the time, both methods succeed simultaneously.

One key conclusion from these experiments is that the secret recovery algorithms enable secret recovery long before the transformer has been trained to high accuracy (even before training loss settles at a low level). Frequently, the model only needs {\it to begin} to learn for the attack to succeed.

\begin{table*}[t]
\centering
\begin{tabular}{cc?cc?ccc}\toprule
\multicolumn{2}{c?}{\begin{tabular}[c]{@{}c@{}}{\bf Regular vs. UTs}\\ (1024/512, 16/4, 8/8)\end{tabular}} &  \multicolumn{2}{c?}{\begin{tabular}[c]{@{}c@{}}{\bf Ungated vs. Gated}\\ (1024/512, 16/4, 8/8)\end{tabular}} & \multicolumn{3}{c}{\begin{tabular}[c]{@{}c@{}}{\bf UT Loops}\\ (1024/512, 16/4, {\bf X/X})\end{tabular}} \\ \midrule
 Regular & UT & Ungated & Gated & 2/8 & 4/4 & 8/2  \\ \midrule
 26.3 & 22.5 & 26.5 & 22.6 & 23.5 & 26.1 & 23.2   \\ \bottomrule
 \vspace{0.05cm}
\end{tabular}
\begin{tabular}{ccc?cccc}\toprule
 \multicolumn{3}{c?}{\begin{tabular}[c]{@{}c@{}}{\bf Encoder Dimension.}\\ ({\bf X}/512, 16/4, 2/8)\end{tabular}} & \multicolumn{4}{c}{\begin{tabular}[c]{@{}c@{}}{\bf Decoder Dimension}\\ (1024/{\bf X}, 16/4, 2/8)\end{tabular}} \\ \midrule
  512  & 2048 & 3040 & 256 & 768 & 1024 & 1536 \\ \midrule
   23.3 & 20.1 & 19.7 & 22.5 & 21.8 & 23.9 & 24.3 \\ \bottomrule
\end{tabular}

\caption{{\bf Architecture Experiments} We test the effect of model layers, loops, gating, and encoder dimension and report the $log_2$ samples required for secret recovery ($n=50$, Hamming weight $3$).}
\label{tab:layers_loops}

\end{table*}

\subsection{Experiments with model architecture}

SALSA's base model architecture is a Universal Transformer (UT) with a copy-gate mechanism.  Table~\ref{tab:layers_loops} demonstrates the importance of these choices. For problem dimension $n=50$, replacing the UT by a regular transformer with 8 encoder/decoder layers, or removing the copy-gate mechanism increases the data requirement by a factor of $14$. Reducing the number of iterations in the shared layers from 8 to 4 has a similar effect. Reducing the number of iterations in either the encoder or decoder (i.e. from 8/8 to 8/2 or 2/8) may further speed up training. Asymmetric transformers (e.g. large encoder and small decoder) have proved efficient for other math problems, e.g. \cite{Kasai2020}, \cite{charton2021linear}, and asymmetry helps SALSA as well. Table~\ref{tab:layers_loops} demonstrates that increasing the encoder dimension from 1024 to 3040, while keeping the decoder dimension at $512$, results in a $7$-fold reduction in sample size. Additional architecture experiments are presented in Appendix~\ref{sec:appx_2D_arch}.

\subsection{Effect of small batch size}

\begin{figure}[h]
    \centering
    \includegraphics[width=0.55\textwidth]{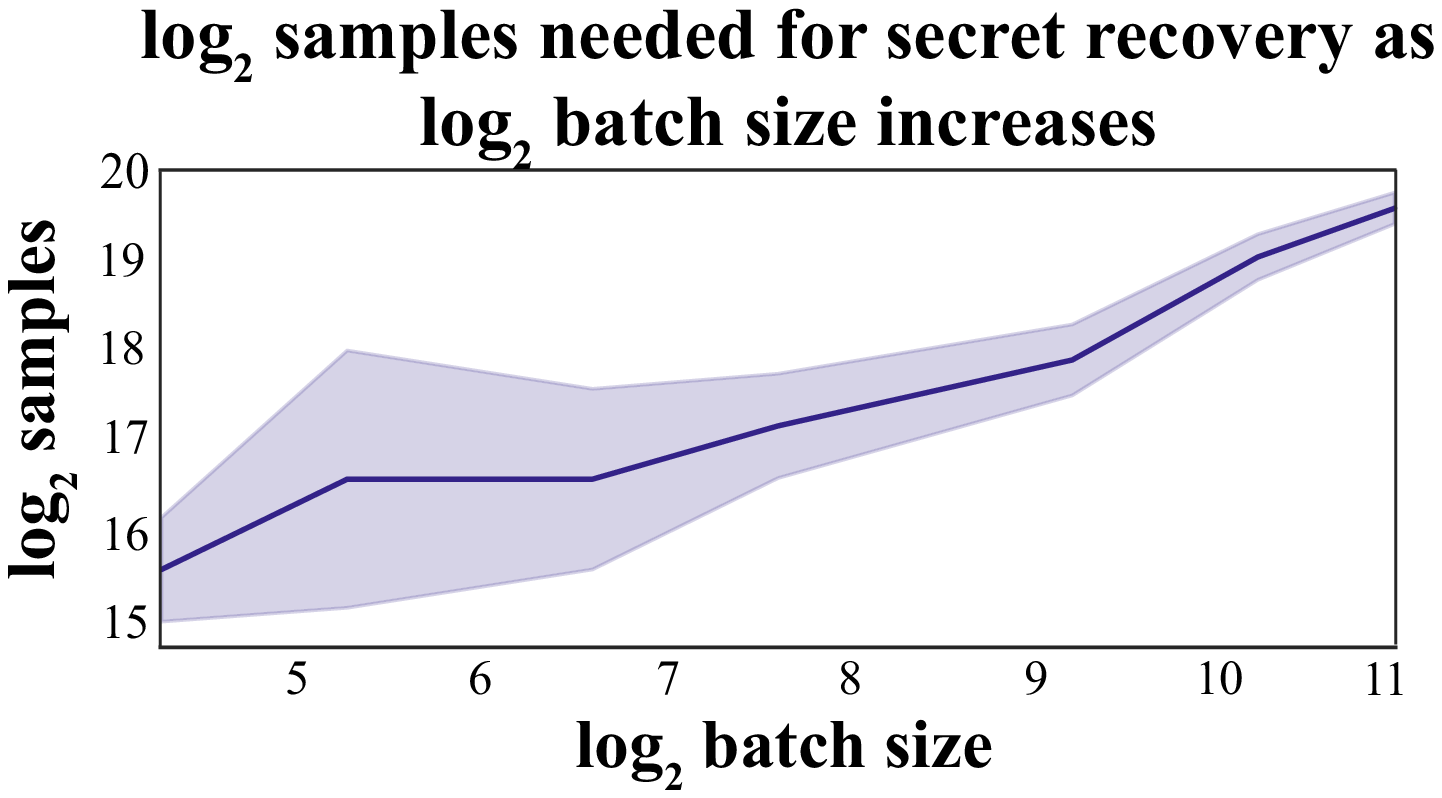}
    \vspace{-0.2cm}
    \caption{
    {\bf Batch size and sample efficiency.} Smaller batch sizes allow faster secret recovery.} 
    \label{fig:batch_size}
    \vspace{-0.2cm}
\end{figure}

We experiment with numerous training parameters to optimize SALSA's performance (e.g. optimizer, learning rate, floating point precision). While most of these settings do not substantively change SALSA's overall performance, we find that batch size has a significant impact on SALSA's sample efficiency. In experiments with $n=50$ and Hamming weight $3$, small batch sizes, e.g. $<50$, allow recovery of secrets with much fewer samples, as shown in Figure~\ref{fig:batch_size}. The same model architecture is used as for $n=50$ in Table~\ref{tab:N_density_arch}.

\subsection{Effect of varying modulus and base}
\label{sec:appx_2D_data}
Complex relationships between $n$, $q$, $B$, and $d$ affect SALSA's ability to fully recover secrets. Here, we explore these relationships, with success measured by the proportion of secret bits recovered. Table~\ref{tab:NvQ} shows SALSA's performance as $n$ and $q$ vary with fixed hamming weight $3$. SALSA performs better for smaller and larger values of $q$, but struggles on mid-size ones across all $N$ values (when hamming weight is held constant). These experiments on small dimension and varying $q$ can be directly compared to concrete outcomes of lattice reduction attacks on LWE for these sizes~\cite[Table 1]{CCLS}. Table~\ref{tab:hvQ} shows the interactions between $q$ and $d$ with fixed $n=50$. Here, we find that varying $q$ does not increase the density of secrets recovered by SALSA. Finally, Table~\ref{tab:BvsQ} shows the $\log_2$ samples needed for secret recovery with different input/output bases with $n=50$ and hamming weight $3$. The secret is recovered for all input/output base pairs except for $B_{in} = 17$, $B_{out}=3$, and using a higher input base reduces the $\log_2$ samples needed for recovery. 

\begin{table}[h]
\centering
\begin{tabular}{ccccccccccc}\toprule
\multirow{2}{*}{$n$} & \multicolumn{10}{c}{$\log_2(q)$} \\
 & 6 &  7 & 8 & 9 & 10 & 11 & 12 & 13 & 14 & 15 \\\midrule
{30} & \cellcolor{lred!80}  0.90 & \cellcolor{lightgreen!50} 1.0 & \cellcolor{lightgreen!50} 1.0 & \cellcolor{lightgreen!50} 1.0 & \cellcolor{lightgreen!50} 1.0 & \cellcolor{lightgreen!50} 1.0 & \cellcolor{lred!80} 0.9 & \cellcolor{lyellow!80} 0.97 & \cellcolor{lightgreen!50} 1.0 & \cellcolor{lightgreen!50} 1.0 \\

{50} &\cellcolor{lred!80}  0.94 & \cellcolor{lightgreen!50} 1.0 & \cellcolor{lightgreen!50} 1.0 & \cellcolor{lightgreen!50} 1.0 & \cellcolor{lightgreen!50} 1.0 & \cellcolor{lightgreen!50} 1.0 & \cellcolor{lred!80} 0.94 & \cellcolor{lyellow!80} 0.98 & \cellcolor{lightgreen!50} 1.0 & \cellcolor{lightgreen!50} 1.0 \\

{70} &\cellcolor{lred!80}  0.96 & \cellcolor{lightgreen!50} 1.0 & \cellcolor{lightgreen!50} 1.0 & \cellcolor{lightgreen!50} 1.0 & \cellcolor{lightgreen!50} 1.0 & \cellcolor{lightgreen!50} 1.0 & \cellcolor{lyellow!80} 0.96 & \cellcolor{lightgreen!50} 1.0 & \cellcolor{lightgreen!50} 1.0 & \cellcolor{lightgreen!50} 1.0 \\

{ 90} & \cellcolor{lred!80} 0.97 & \cellcolor{lyellow!80} 0.97 & \cellcolor{lightgreen!50} 1.0 & \cellcolor{lightgreen!50} 1.0 & \cellcolor{lyellow!80} 0.97 & \cellcolor{lightgreen!50} 1.0 & \cellcolor{lyellow!80} 0.97 & \cellcolor{lred!80} 0.97 & \cellcolor{lred!80} 0.97 & \cellcolor{lyellow!80} 0.99 \\\bottomrule
\end{tabular}

\vspace{0.1cm}
\caption{\small $\mathbf{n}$ vs $\mathbf{q}$. Results reported are proportion of total secret bits recovered for various $n$/$q$ combinations. {\color{lightgreen} Green cells} mean the secret was fully guessed,  {\color{lyellow} yellow cells} all the $1$ bits were correctly guessed during training, and {\color{lred} red cells} mean SALSA failed. Fixed parameters: $h=3$, $base_{in} = base_{out} = 81$. $1/1$ encoder layers, $1024/512$ embedding dimension, $16/4$ attention heads, $2/8$ loops.}
\vspace{-0.2cm}
\label{tab:NvQ}

\end{table}
\begin{table}[h]
\centering
\begin{tabular}{ccccccccccc}\toprule
\multirow{2}{*}{{d}} & \multicolumn{10}{c}{$\log_2(q)$} \\
 & 6 & 7 & 8 & 9 & 10 & 11 & 12 & 13 & 14 & 15 \\ \midrule

{0.06} & \cellcolor{lred!80}  0.94 & \cellcolor{lightgreen!50} 1.0 & \cellcolor{lightgreen!50} 1.0 & \cellcolor{lightgreen!50} 1.0 & \cellcolor{lightgreen!50} 1.0 & \cellcolor{lightgreen!50} 1.0 & \cellcolor{lred!80} 0.94 & \cellcolor{lyellow!80} 0.98 & \cellcolor{lightgreen!50} 1.0 & \cellcolor{lightgreen!50} 1.0 \\

{0.08} & \cellcolor{lred!80} 0.92 & \cellcolor{lred!80} 0.92 & \cellcolor{lightgreen!50} 1.0 & \cellcolor{lyellow!80} 0.92 & \cellcolor{lred!80} 0.94 & \cellcolor{lred!80} 0.92 &\cellcolor{lred!80} 0.94 & \cellcolor{lred!80} 0.94 &\cellcolor{lyellow!80} 0.94 & \cellcolor{lred!80} 0.94 \\

{0.10} & \cellcolor{lred!80} 0.90 &\cellcolor{lred!80} 0.94 & \cellcolor{lyellow!80} 0.96 & \cellcolor{lred!80} 0.90 & \cellcolor{lred!80} 0.90 & \cellcolor{lyellow!80} 0.92 & \cellcolor{lred!80} 0.90 & \cellcolor{lred!80} 0.92 & \cellcolor{lyellow!80} 0.94 & \cellcolor{lred!80} 0.92 \\ \bottomrule
\end{tabular}

\vspace{0.1cm}
\caption{\small $\mathbf{q}$ vs $\mathbf{d}$. Results reported are proportion of total secret bits recovered for various $q$/$d$ combinations. {\color{lightgreen} Green cells} mean the secret was fully guessed,  {\color{lyellow} yellow cells} all the $1$ bits were correctly guessed during training, and {\color{lred} red cells} mean SALSA failed. Parameters: $N=50$, $base_{in} = base_{out} = 81$. $1/1$ layers, $3040/1024$ embedding dimension, $16/4$ attention heads, $2/8$ loops.}
\label{tab:hvQ}
\vspace{-0.8cm}
\end{table}


\begin{table}[H]
\centering
\vspace{0.1cm} 
\begin{tabular}{c|ccccc} \toprule
\multirow{2}{*}{$B_{in}$} & \multicolumn{5}{c}{$B_{out}$} \\ 
 & 3 & 7 & 17 & 37 & 81 \\ \midrule
7 & 25.8 & \textbf{24.0} & 25.4 & 24.5 & 24.9 \\
17 & - & 25.9 & 27.2 & 25.6 & \textbf{25.4} \\
37 & 22.8 & \textbf{22.1} & 22.6 & 22.2 & 22.9 \\
81 & 22.2 & \textbf{22.1} & 22.4 & 21.9 & \textbf{22.1} \\ \bottomrule
\end{tabular}
\vspace{0.1cm}
\caption{\small {$\mathbf{B_{in}}$ v. $\mathbf{B_{out}}$.} Effect of input and output integer base representation on $\log_2$ samples needed for secret recovery. In each row, the \textbf{bold} numbers represent the lowest $\log_2$ samples needed for this value of $B_{in}$. Parameters: $n=50$, $h=3$, 2/2 layers, 1024/512 embedding dimension, 16/4 attention heads, and 2/8 loops.}
\label{tab:BvsQ}
\vspace{-0.8cm}
\end{table}

\subsection{Increasing dimension and density}
\label{sec:additional_results}

To attack real-world LWE problems, SALSA will have to successfully handle larger dimension $n$ and density $d$. Our experiments with architecture suggest that increasing model size, and especially encoder dimension, is the key factor to scaling $n$. Empirical observations indicate that scaling $d$ is a much harder problem. We hypothesize that this is due to the subset sum modular addition at the core of LWE with binary secrets. For a secret with Hamming weight $h$, the base operation $ {\bf a} \cdot {\bf s} + e \mod q$ is a sum of $h$ integers, followed by a modulus. For small values of $h$, the modulus operation is not always necessary, as the sum might not exceed $q$. As density increases, so does the number of times the sum ``wraps around'' the modulus, perhaps making larger Hamming weights more difficult to learn. To test this hypothesis, we limited the range of the coordinates in ${\bf a}$, so that $a_i < r$, with $r=\alpha q$ and $0.3 < \alpha < 0.7$. For $n=50$, we recovered secrets with density up to $0.3$, compared to $0.08$ with the full range of coordinates (see Table~\ref{tab:restrict_A}). Density larger than $0.3$ is no longer considered a sparse secret.

\begin{table}[h]
     \centering
    \begin{tabular}{cccccccc}\toprule
    \multirow{2}{*}{$d$} & \multicolumn{7}{c}{\bf Max $a$ value as fraction of $q$}                                   \\ 
                         &  $0.35$ & $0.4 $ & $0.45$ & $0.5 $ & $0.55$ & $0.6$ & $0.65 $ \\ \midrule
    $0.16$                    & \cellcolor{lightgreen!50}  1.0  & \cellcolor{lightgreen!50}  1.0  &  \cellcolor{lightgreen!50}  1.0     &  \cellcolor{lightgreen!50} 1.0    & \cellcolor{lightgreen!50} 1.0 & \cellcolor{lightgreen!50} 1.0    & \cellcolor{lyellow!80} 0.88      \\
    
    $0.18$                    &   \cellcolor{lightgreen!50}  1.0   &  \cellcolor{lightgreen!50}  1.0  & \cellcolor{lightgreen!50}  1.0   & \cellcolor{lightgreen!50} 1.0  &     \cellcolor{lred!80} 0.82    &   \cellcolor{lred!80}   0.86  &  \cellcolor{lred!80}   0.84 \\

    $0.20$                  &     \cellcolor{lightgreen!50}  1.0    &  \cellcolor{lightgreen!50}  1.0     &   \cellcolor{lightgreen!50}  1.0    & \cellcolor{lightgreen!50} 1.0  &     \cellcolor{lightgreen!50} 1.0   & \cellcolor{lred!80} 0.82  &  \cellcolor{lred!80} 0.82   \\
    
    $0.22$                   &    \cellcolor{lyellow!80} 0.98  &    \cellcolor{lightgreen!50}  1.0   &  \cellcolor{lightgreen!50}  1.0    &     \cellcolor{lyellow!80} 0.98  &      \cellcolor{lred!80} 0.80   &    \cellcolor{lred!80}  0.78  &  \cellcolor{lyellow!80}  0.86  \\
    
    $0.24$                  & \cellcolor{lightgreen!50}  1.0   &   \cellcolor{lightgreen!50}  1.0   & \cellcolor{lightgreen!50} 1.0   &     \cellcolor{lyellow!80} 0.98   &    \cellcolor{lred!80} 0.78    &    \cellcolor{lred!80} 0.78   &  \cellcolor{lyellow!80} 0.80 \\
    
    $0.26$                   & \cellcolor{lightgreen!50} 1.0  & \cellcolor{lightgreen!50} 1.0   &    \cellcolor{lyellow!80} 0.88     &        \cellcolor{lyellow!80} 0.92  &   \cellcolor{lyellow!80}   0.76   &  \cellcolor{lred!80}  0.76    &  \cellcolor{lred!80}  0.76    \\
    
    $0.28$                   &    \cellcolor{lyellow!80} 0.98     & \cellcolor{lightgreen!50}  1.0  &       \cellcolor{lyellow!80} 0.80   &         \cellcolor{lred!80} 0.74 &   \cellcolor{lred!80}   0.74   &    \cellcolor{lred!80} 0.76   &  \cellcolor{lred!80}  0.74    \\
    
    $0.30$                   &    \cellcolor{lyellow!80} 0.98      &   \cellcolor{lightgreen!50}  1.0 &     \cellcolor{lred!80} 0.93    &   \cellcolor{lred!80} 0.76    &   \cellcolor{lred!80}  0.72    &    \cellcolor{lred!80} 0.74   &  \cellcolor{lred!80} 0.74  \\ \bottomrule
    \end{tabular}
    \vspace{0.1cm}
        \captionof{table}{
        {\bf Secret recovery when max $\bf a$ value is bounded.} Results shown are fraction of the secret recovered by SALSA for $n=50$ with varying $d$ when $a$ values are $\le p \cdot Q$. {\color{lightgreen} Green} means that $s$ was fully recovered. {\color{lyellow} Yellow} means all $1$ bits were recovered, but not all $0$ bits. {\color{lred} Red} means SALSA failed. }
    \label{tab:restrict_A}
    \vspace{-0.2cm}
\end{table}

\vspace{-0.3cm}
\subsection{Increasing error size}
\label{subsec:error_size}
\vspace{-0.3cm}

\begin{figure*}[h]
\begin{minipage}[c]{0.480\textwidth}
    \centering
    \includegraphics[width=1.05\textwidth]{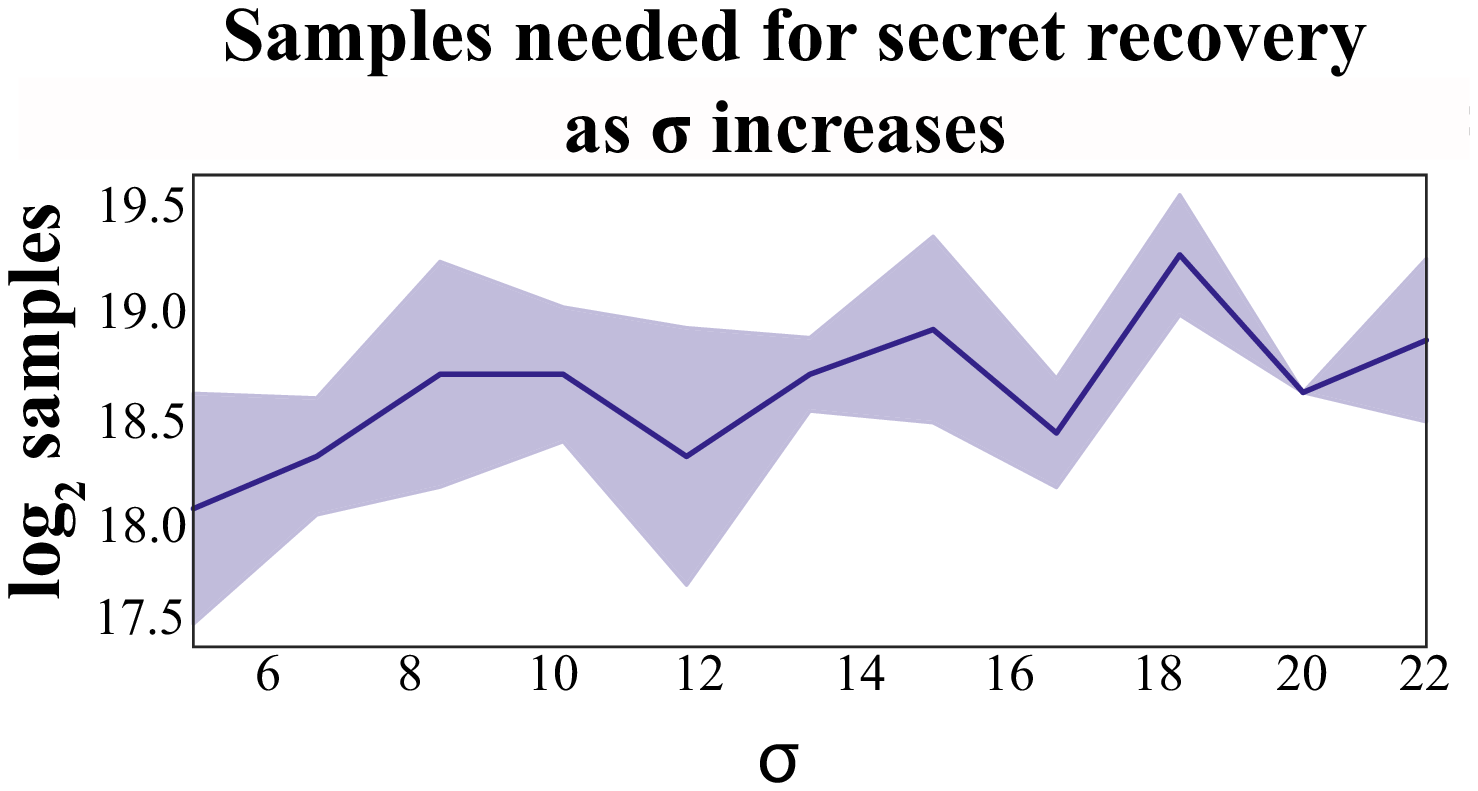}
    \vspace{-0.6cm}
    \caption{\small {\bf $log_2$ samples vs. $\sigma$.} As $\sigma$ increases, $log_2$ samples required for $n=50$, $h=3$ increases linearly.}
    \label{fig:sigma_fixed_N}
    \end{minipage}\hfill
\begin{minipage}[c]{0.48\textwidth}    
    \centering
    \vspace{0.7cm}
\begin{tabular}{ccccc}
    \toprule
    {n / $\sigma$} & 30/5 & 50/7 & 70/8 & 90/9 \\\midrule
    {log Samples} & 18.0 & 18.5 & 19.3 & 19.6 \\ \bottomrule
    \end{tabular}
    \vspace{1.1cm}
    \captionof{table}{
    \small
    \textbf{$log_2$ samples needed for secret recovery when $\sigma = \lfloor \sqrt{n} \rfloor$.} Results averaged over $6$ SALSA runs at each $n/\sigma$ level.}
    \label{tab:sigma}
    \end{minipage}
\end{figure*}

Theoretically for lattice problems to be hard, $\sigma$ should scale with $\sqrt{n}$, although this is often ignored in practice, e.g.~\cite{HES}. Consequently, we run most SALSA experiments with $\sigma=3$, a common choice in existing RLWE-based systems.  Here, we investigate how SALSA performs as $\sigma$ increases. First, to match the theory, we run experiments where $\sigma = \lfloor \sqrt{n} \rfloor$, $h=3$ and found that SALSA recovers secrets even as $\sigma$ scales with $\sqrt{n}$ (see Table~\ref{tab:sigma}, same model architecture as Table~\ref{tab:N_density_arch}). Second, we evaluate SALSA's performance for fixed $n$ and $h$ values as $\sigma$ increases. We fix $n=50$ and $h=3$ and evaluate for $\sigma$ values up to $\sigma=24$. Secret recovery succeeds for all tests, although the number of samples required for recovery increases linearly (see Figure~\ref{fig:sigma_fixed_N} in Appendix). For both sets of experiments, we reuse samples up to $10$ times. 

\section{SALSA in the Wild}
\label{sec:discussion}

\subsection{Problem Size}
\label{subsec:problem_size}
Currently, SALSA can recover secrets from LWE samples with $n$ up to $128$ and density $d=0.02$. It can recover higher density secrets for smaller $n$ ($d=0.08$ when $n=50$). 
Sparse binary secrets are used in real-world RLWE-based homomorphic encryption implementations, and attacking these is a future goal for SALSA. To succeed, SALSA will need to  scale to attack larger $n$. 
Other parameters for full-strength homomorphic encryption such as secret density, are within SALSA's current reach, (the secret vector in HEAAN has $d < 0.002$) and error size (~\cite{HES} recommends $\sigma=3.2$). 

Other LWE-based schemes use dimensions that seem achievable given our current results. For example, in the LWE-based public key encryption scheme Crystal-Kyber~\cite{crys_kyber}, the secret dimension is $k\times 256$ for $k = \{2,\,3,4\}$, an approachable range for SALSA. The LWE-based signature scheme Crystal-Dilithium has similar sizes for $n$~\cite{crys_dilithium}. However, these schemes don't use sparse binary secrets, and adapting SALSA to non-binary secrets is a non-trivial avenue for future work.

\begin{figure*}[t]
    \centering
    \includegraphics[width=0.6\textwidth]{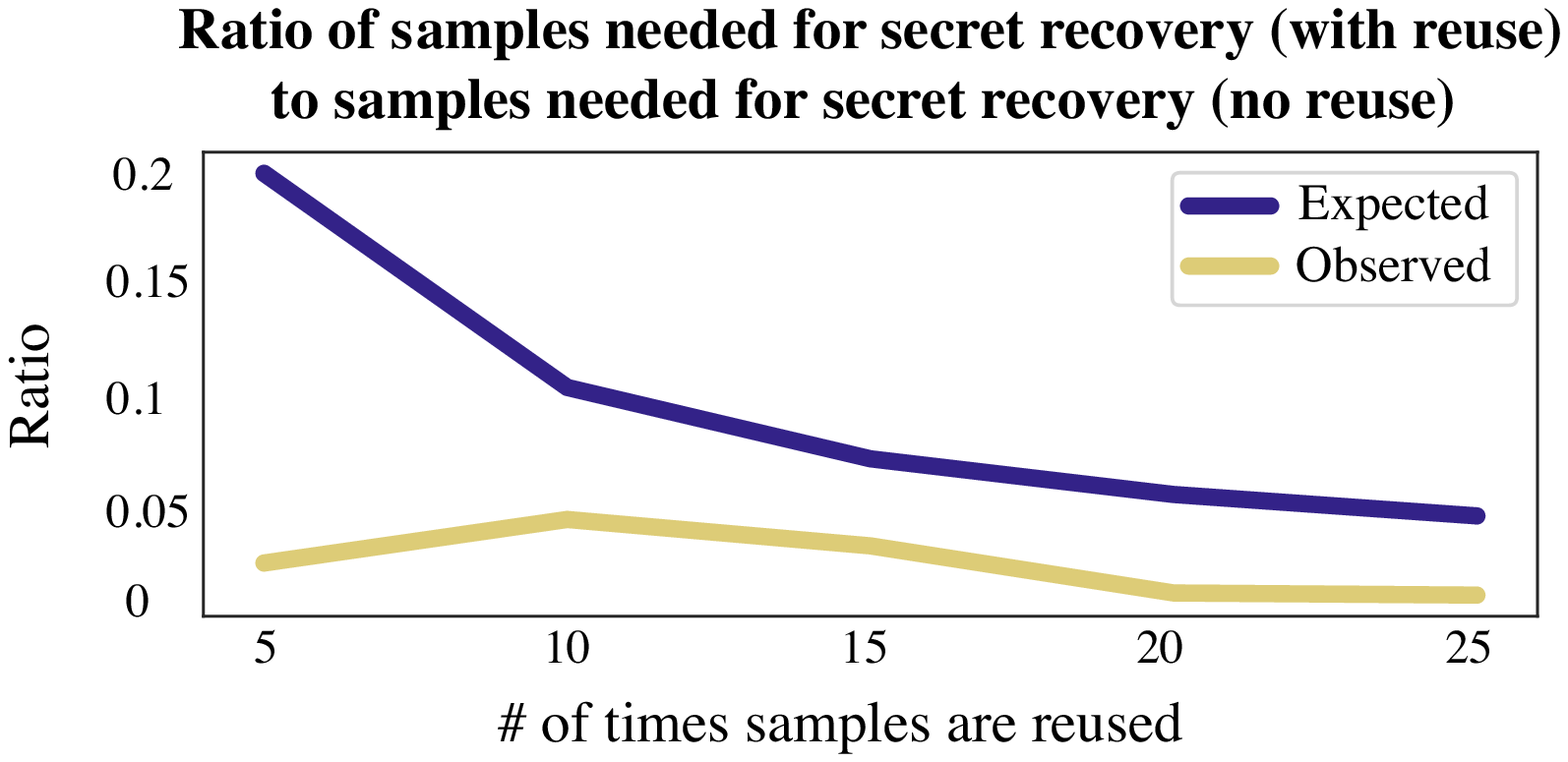}
    \caption{\small {\bf Reusing LWE samples yields a significant decrease in the number of samples needed for secret recovery.} Shown here is the ratio of samples required for secret recovery with reuse to the samples required for secret recovery without reuse, both expected (top curve) and observed (bottom curve, better than expected). }
    \label{fig:sample_efficency}
\end{figure*}
\begin{table*}[t]

    \centering
\vspace{0.1cm}
\centering
\begin{tabular}{c|ccccc}\toprule
\multirow{2}{*}{K} & \multicolumn{5}{c}{\textbf{Times Samples Reused}} \\
 & 5 & 10 & 15 & 20 & 25 \\ \midrule
1 & 20.42 & 21.915 & 20.215 & 17.610 & 17.880 \\
2 & 19.11 & 20.605 & 18.695 & 18.650 & 16.490 \\
3 & 20.72 & 19.825 & 17.395 & 18.325 & 16.200 \\
4 & 19.11 & 19.065 & 17.180 & 15.405 & 16.355 \\ \bottomrule
\end{tabular}
\caption{\small {\bf Effect of Sample Reuse on Sample Efficiency.} Sample reuse via linear combinations greatly improves sample efficiency. The secret is recovered in all experiments, indicating that error introduced by sample combination does not degrade performance. Parameters: $n=50$, Hamming 3, 2/2 encoder layers, 1024/512 embedding, 16/4 attention heads, 2/8 loops.}
\label{tab:appx_sample_reuse}

\end{table*}

\subsection{Sample Efficiency}
\looseness=-1 A key requirement of real-world LWE attacks is sample efficiency. In practice, an attacker will only have access to a small set of LWE instances $({\bf a}, b)$ for a given secret $s$. For instance, in Crystal-Kyber, there are only $(k+1)n$ LWE instances available with $k = 2,\,3$ or $4$ and $n = 256$.
The experiments in \cite{CCLS,bai_galbraith} use fewer than 500 LWE instances. The TU Darmstadt challenge provides $n^2$ LWE instances to attackers.

The $\log_2 samples$ column of Table~\ref{tab:N_density_arch} lists the number of LWE instances needed for model training. This number is much larger than what is likely available in practice, so it is important to reduce sample requirements. Classical algebraic attacks on LWE require LWE instances to be linearly independent, but {\it SALSA does not have this limitation.} Thus, we can reduce SALSA's sample use in several ways. 
First, we can reuse samples during training. Figure~\ref{fig:sample_efficency} confirms that this allows secret recovery with fewer samples. Second, we can use integer linear combinations of given LWE samples to make new samples which have the same secret but a larger error $\sigma$, explained below. Using this method, we can generate up to $2^{42}$ new samples from 100 original samples. 

\para{Generating New Samples.} It is possible to generate new LWE samples from existing ones via linear combinations. Assume we have access to $m$ LWE samples, and suppose a SALSA model can still learn from samples from a family of Gaussian distributions with standard deviations less than $N\sigma$, where $\sigma$ is the standard deviation of the original LWE error distribution. Experimental results show that SALSA's models are robust to $\sigma$ up to $24$ (see \S\ref{subsec:error_size}). With these assumptions, the number of new samples we could make is equal to the number of vectors $\textbf{v} = (v_1,\ldots,v_m)^T \in \ZZ^m$ such that $\sum_{i=1}^m |v_i| \leq N^2.$ For simplicity, assume that $v_i's$ are nonnegative. Then,  there are $\sum_{n=1}^{N^2} \frac{(m+n-1)!}{(m-1)!(n)!}$ new LWE samples one can generate.

\para{Results on Generated Samples.} Next, we show how SALSA performs when we combine different numbers of existing samples to create new ones for model training. We use the above method but do not allow the same sample to appear more than once in a given combination. We fix $K$, which is the number of samples used in each linear combination of reused samples. Then, we generate $K$ coefficients for the combined samples, where each $k_i$ is randomly chosen from $\{-1,0,1\}$. Finally, we randomly select $K$ samples from a pre-generated set of samples, and produce a new sample from their linear combination with the $k_i$ coefficients. These new samples follow error distribution with the standard deviation less than or equal to $\sqrt{K}\sigma$.

\looseness=-1 We experiment with different values of $K$, as well as different numbers of times we reuse a sample in linear combinations before discarding it. The $\log_2$ samples required for secret recovery for each ($K$, times reused) setting are reported in Table~\ref{tab:appx_sample_reuse}. The first key result is that the secret is recovered in all experiments, confirming that the additional error introduced via sample combination does not disrupt model learning. Second, as expected, sample requirements decrease as we increase $K$ and times reused.

\subsection{Comparison to Baselines}
Most existing attacks on LWE such as uSVP and dual attack use an algebraic approach that involves building a lattice from LWE instances such that this lattice contains an exceptionally short vector which encodes the secret vector information. Attacking LWE then involves finding the short vector via lattice reduction algorithms like BKZ \cite{CN11_BKZ}.  For LWE with sparse binary secrets, the main focus of this paper, various techniques can be adapted to make algebraic attacks more efficient. \cite{CCLS, bai_galbraith} and~\cite{Cheon_hybrid_dual} provide helpful overviews of algebraic attacks on sparse binary secrets. More information about attacks on LWE is in Section ~\ref{subsec:lwe_attack}.

Compared to existing attacks, SALSA's most notable feature is its novelty. We do not claim to have better runtime, neither do we claim the ability to attack real-world LWE problems (yet). Rather, we introduce a new attack and demonstrate with non-toy successes that transformers can be used to attack LWE.  Given our goal, no serious SALSA speedup attempts have been made so far, but a few simple improvements could reduce runtime. First, the slowest step in SALSA is model training, which can be greatly accelerated by distributing it across many GPUs. 
Second, our transformers are trained from scratch, so pre-training them on such basic tasks as modular arithmetic could save time and data. Finally, the amount of training needed before the secret is recovered depends in large part on the secret guessing algorithms.
New algorithms might allow SALSA to recover secrets faster.

Since SALSA does not involve finding the shortest vector in a lattice, it has an advantage over the algebraic attacks -- with all LWE parameters fixed and in the range of SALSA, SALSA can attack the LWE problem for a smaller modulus $q$ compared to the algebraic attacks. This is because the target vector is relatively large in the lattice when $q$ is smaller and is harder to find. For instance, in \cite{CCLS}, their Table 2 shows that when the block size is $45$, for $n = 90$, their attack does not work for $q$ less than 10 bits, but we can handle $q$ as small as 8 bits (Table~\ref{tab:NvQ}).

\subsection{Overview of Attacks on LWE}\label{subsec:lwe_attack}

\looseness=-1 Typically, attacks on the LWE problem use an algebraic approach and involve lattice reduction algorithms such as BKZ \cite{CN11_BKZ}. The LWE problem can be turned into a BDD problem (Bounded Distance Decoding) by considering the lattice generated by LWE instances, and BDD can be solved by Babai's Nearest Plane algorithm \cite{LP11} or pruned enumeration \cite{LN13}, this is known as the primal BDD attack. The primal uSVP attack constructs a lattice via Kannan's embedding technique \cite{Kan87} whose unique shortest vector encodes the secret information. The Dual attack \cite{MR09} finds a short vector in the dual lattice which can be used to distinguish the LWE samples from random samples. Moreover, there are also attacks that do not use lattice reduction. For instance, the BKW style attack \cite{ACJFP15} uses combinatorial methods; however, this assumes access to an unbounded number of LWE samples.

Binary and ternary secret distributions are widely used in homomorphic encryption schemes. In fact, many implementations even use a sparse secret with Hamming weight $h$. In \cite{BLPRS13_poly_modulus_hardness} and \cite{Micc18}, both papers give reductions of binary-LWE to hard lattice problems, implying the hardness of binary-LWE. Specifically, the $(n,q)$-binary-LWE problem is related to a $(n/t,q)$-LWE problem where $t = O(\log(q))$. For example, if $n = 256$ is a hard case for uniform secret, we can be confident that binary-LWE is hard for $n = 256\log(256) = 2048$. But \cite{bai_galbraith} refines this analysis and gives an attack against binary-LWE. Their experimental results suggest that increasing the secret dimension by a $\log(\log(n))$ factor might be already enough to achieve the same security level for the corresponding LWE problem with uniform secrets.

Let us now turn to the attacks on (sparse) binary/ternary secrets. The uSVP attack is adapted to binary/ternary secrets in \cite{bai_galbraith}, where a balanced version of Kannan's embedding is considered. This new embedding increases the volume of the lattice and hence the chance that lattice reduction algorithms will return the shortest vector. The Dual attack for small secret is considered in \cite{Albrecht2017_sparse_binary} where the BKW-style techniques are combined. The BKW algorithm itself also has a binary/ternary-LWE variant \cite{AFFP14_BKW_small}. Moreover, several
additional attacks are known which can exploit the sparsity of an LWE secret, such as \cite{JFRT16,Cheon_hybrid_dual} . All of these techniques use a combinatorial search in some dimension $d$, and then follow by solving a lattice problem in dimension $n-d$. For sparse secrets, this is usually more efficient than solving the original lattice problem in dimension $n$. 

\section{Related Work}
\label{sec:back}

{\bf Use of ML for cryptanalysis.} The fields of cryptanalysis and machine learning are closely related~\cite{rivest1991cryptography}. Both seek to approximate an unknown function $\mathcal{F}$ using data, although the context and techniques for doing so vary significantly between the fields. Because of the similarity between the domains, numerous proposals have tried to leverage ML for cryptanalysis. ML-based attacks have been proposed against a number of cryptographic schemes, including block ciphers~\cite{alani2012neuro, so2020deep, kimura2021output, baek2020recent, gohr2019improving, benamira2021deeper, chen2021bridging}, hash functions~\cite{goncharov2019using}, and substitution ciphers~\cite{ahmadzadeh2021novel, srivastava2018learning, aldarrab2020can}. Although our work is the first to use recurrent neural networks for lattice cryptanalysis, prior work has used them for other cryptographic tasks. For example, \cite{greydanus2017learning} showed that LSTMs can learn the decryption function for polyalphabetic ciphers like Enigma. Follow-up works used variants of LSTMs, including transformers, to successfully attack other substitution ciphers~\cite{ahmadzadeh2021novel, srivastava2018learning, aldarrab2020can}.

{\bf Use of transformers for mathematics.} 
The use of language models to solve problems of mathematics has received much attention in recent years. A first line of research explores math problems set up in natural language. \cite{saxton2019analysing} investigated their relative difficulty, using LSTM \citep{hochreiter1997long} and transformers, 
while \cite{griffith021} showed large transformers could achieve high accuracy on elementary/high school problems.
A second line explores various applications of transformers on formalized symbolic problems. \cite{lample2019deep} showed that symbolic math problems 
could be solved to state-of-the-art accuracy with transformers. \cite{welleck2021symbolic} discussed their limits when generalizing out of their training distribution. Transformers have been applied to dynamical systems \citep{charton2020learning}, transport graphs \citep{charton2021deep}, theorem proving \citep{polu2020generative}, SAT solving \citep{shi2021transformerbased}, and symbolic regression \citep{biggio2021neural, dascoli2022deep}.  
A third line of research focuses on arithmetic/numerical computations and has had slower progress. \cite{palamasinvestigating} and \cite{nogueira2021investigating} discussed the difficulty of performing arithmetic operations with language models. Bespoke network architectures have been proposed for arithmetic operations~\citep{kaiser2015neural, trask2018neural}, and transformers were used for addition and similar operations \citep{power2022grokking}. \cite{charton2021linear} showed that transformers can learn  numerical computations, such as linear algebra, and introduced the shallow models with shared layers used in this paper. 

\section{Conclusion}
In this paper, we demonstrate that transformers can be trained to perform modular arithmetic. Building on this capability, we design SALSA, a method for attacking the LWE problem with binary secrets, a hardness assumption at the foundation of many lattice-based cryptosystems. 
We show that SALSA can break LWE problems of medium dimension (up to $n=128$), comparable to those in the Darmstadt challenge \cite{darmstadt_lwe}, with sparse binary secrets. 
This is the first paper {to} use transformers to solve hard problems in lattice-based cryptography. Future work will attempt to scale up SALSA to attack higher dimensional lattices with more general secret distributions.

The key to scaling up to larger lattice dimensions seems to be to increase the model size, especially the dimensions, the number of attention heads, and possibly the depth. 
Large architectures should scale to higher dimensional lattices such as $n=256$ which is used in practice. Density, on the other hand, is constrained by the performance of transformers on modular arithmetic. Better representations of finite fields could improve transformer performance on these tasks. Finally, our secret guessing algorithms enable SALSA to recover secrets from low-accuracy transformers, therefore reducing the data and time needed for the attack. Extending these algorithms to take advantage of partial learning should result in better performance.

\newpage
\bibliographystyle{apalike}
\bibliography{main}

\newpage
\appendix
\begin{center} \Large \bf Appendix \end{center}

\section{Further Details of LWE}
\label{sec:appx_lwe}

\subsection{Ring Learning with Errors (\S\ref{sec:lattice})}\label{subsec:RLWE}

We now define RLWE samples and explain how to get LWE instances from them. Let $n$ be a power of 2, and let $R_q = \ZZ_q[x]/(x^n+1)$ be the set of polynomials whose degrees are at most $n-1$ and coefficients are from $\ZZ_q$. The set $R_q$ forms a ring with additions and multiplications defined as the usual polynomial additions and multiplications in $\ZZ_p[x]$ modulo $x^n+1$. One RLWE sample refers to the pair \[(a(x),b(x) : = a(x)\cdot s(x) + e(x)),\] where $s(x)\in R_q$ is the secret and $e(x) \in R_q$ is the error with coefficients subject to the error distribution. 

Let $\textbf{a},\textbf{s} \mbox{ and } \textbf{e} \in \ZZ_q^n$ be the coefficient vectors of $a(x),s(x)$ and $e(x)$. Then the coefficient vector $\textbf{b}$ of $b(x)$ can be obtained via the formula 
\[\textbf{b} = \textbf{A}_{a(x)}^{\text{circ}}\cdot \textbf{s} + \textbf{e},\] 
here $\textbf{A}_{a(x)}^{\text{circ}}$ represents the $n\times n$ generalized circulant matrix of $a(x)$. Precisely, let $a(x) = a_0 + a_1x + \ldots + a_{n-2}x^{n-2} + a_{n-1}x^{n-1}$, then $\textbf{a} = (a_0,a_1,\ldots,a_{n-2},a_{n-1})$ and  \[\textbf{A}_{a(x)}^{\text{circ}} = \begin{bmatrix}
    a_0       & -a_{n-1} & -a_{n-2} & \dots & -a_1 \\
    a_1       & a_0 & -a_{n-1} & \dots & -a_2 \\
    a_2       & a_1 & a_0 & \dots & -a_3\\
    \vdots       & \vdots & \vdots & \ddots & \vdots    \\
    a_{n-1}     & a_{n-2} & a_{n-3} & \dots & a_0
\end{bmatrix}.\]
Therefore, one RLWE sample gives rise to $n$ LWE instances by taking the rows of $\textbf{A}_{a(x)}^{\text{circ}}$ and the corresponding entries in ${\bf b}$.

\subsection{Search to Decision Reduction for Binary Secrets (\S\ref{sec:lattice})}\label{subsec:search_decision} We give a proof of the search binary-LWE to decisional binary-LWE reduction. This is a simple adaption of the reduction in \cite{Reg05} to the binary secrets case. We call an algorithm a $(T,\gamma)$-distinguisher for two probability distributions $\mathcal{D}_0,\mathcal{D}_1$ if it runs in time $T$ and has a distinguishing advantage $\gamma$. We use $\lwe_{n,m,q,\chi}$ to denote the LWE problem which has secret dimension $n$, $m$ LWE instances, modulus $q$ and the secret distribution $\chi$.

\begin{theorem}
If there is a $(T,\gamma)$-distinguisher for decisional binary-$\lwe_{n,m,q,\chi}$, then there is a $T' = \widetilde{O}(Tn/\gamma^2)$-time algorithm that solves search binary-$\lwe_{n,m',q,\chi}$ with probability $1-o(1)$, where $m' = \widetilde{O}(m/\gamma^2)$.
\end{theorem}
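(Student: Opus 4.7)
The plan is to mirror Regev's original search-to-decision reduction, specialized to the binary setting where each coordinate offers only two hypotheses. Given the promised $(T,\gamma)$-distinguisher $\mathcal{D}$, I build a search algorithm that recovers $\mathbf{s}\in\{0,1\}^n$ one coordinate at a time. For each index $i$, I design a randomized transformation on LWE samples whose output distribution depends only on the bit $s_i$, in a way that $\mathcal{D}$ is exactly trained to detect.

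The central transformation is as follows. Given an LWE sample $(\mathbf{a},b)$ with $b=\mathbf{a}\cdot\mathbf{s}+e\bmod q$, sample $r\in\mathbb{Z}_q$ uniformly and output $(\mathbf{a}',b)$ where $\mathbf{a}'=\mathbf{a}+r\,\mathbf{e}_i\bmod q$ and $\mathbf{e}_i$ is the $i$-th standard basis vector. A direct calculation gives $b=\mathbf{a}'\cdot\mathbf{s}+e-r\,s_i\bmod q$. Because $a_i$ is uniform and independent of $r$, the vector $\mathbf{a}'$ is uniform, and conditioned on $\mathbf{a}'$ the variable $r$ remains uniform on $\mathbb{Z}_q$. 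Hence if $s_i=0$ the pair $(\mathbf{a}',b)$ is distributed as a fresh LWE sample with the same secret $\mathbf{s}$, whereas if $s_i=1$ the difference $b-\mathbf{a}'\cdot\mathbf{s}=e-r$ is uniform and independent of $\mathbf{a}'$, so $(\mathbf{a}',b)$ is uniform on $\mathbb{Z}_q^n\times\mathbb{Z}_q$. In short, the unknown bit $s_i$ is exactly what $\mathcal{D}$ is promised to distinguish, so feeding it $m$ transformed samples produces a $\gamma$-biased estimator of $s_i$.

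The remaining step is amplification and accounting. For each $i$, I run the above test $k=O(\log(n)/\gamma^2)$ times on disjoint batches of $m$ LWE samples, using independent fresh randomness $r$ per run, and take the majority of $\mathcal{D}$'s outputs; by a Chernoff bound, the per-coordinate success probability is at least $1-1/n^2$. Crucially, the same pool of $m'=mk=\widetilde{O}(m/\gamma^2)$ LWE samples can be recycled across all $n$ coordinates (with a fresh $r$ drawn for each $i$), because Boole's inequality gives a failure probability of at most $n\cdot n^{-2}=1/n=o(1)$ regardless of how the per-coordinate tests are correlated. The total running time is $n\cdot k\cdot T=\widetilde{O}(Tn/\gamma^2)$, matching the claimed bounds.

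The main obstacle I anticipate is the coupling between the decision promise and the per-coordinate test: $\mathcal{D}$ is promised to succeed on the binary-LWE distribution as a whole, but my transformation invokes it on LWE samples tied to our specific unknown $\mathbf{s}$. Translating an average-case advantage into a per-secret advantage of the same order requires some self-reducibility argument, which is standard for uniform secrets via a random shift but is slightly more delicate for a fixed binary-secret distribution; one route is to argue that, with probability $1-o(1)$ over the random binary $\mathbf{s}$ drawn from $\chi$, the distinguisher retains advantage $\Omega(\gamma)$ on samples with this specific $\mathbf{s}$, and to absorb the loss into the $\widetilde{O}(\cdot)$ factor. A secondary item to verify is that the $s_i=1$ branch really yields a uniform output, which only requires $-r$ to be uniform on $\mathbb{Z}_q$; this is automatic since $s_i$ is a unit, and in particular no primality assumption on $q$ is invoked.
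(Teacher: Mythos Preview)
Your proposal is correct and follows essentially the same route as the paper: add a uniform offset to the $i$-th coordinate of $\mathbf{a}$, observe that the output is LWE when $s_i=0$ and uniform when $s_i=1$, amplify via $\widetilde{O}(1/\gamma^2)$ repetitions and a Chernoff majority, and reuse the same $m'$ samples across all $n$ coordinates with a union bound. You are in fact more careful than the paper on two points it glosses over---the conditioning argument showing $(\mathbf{a}',b)$ is truly uniform when $s_i=1$, and the average-case versus fixed-secret advantage of the distinguisher---but the underlying reduction is identical.
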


\begin{proof}
Let $\textbf{s} = (s_1,\ldots,s_n)$ with $s_i \in \{0,1\}$. We demonstrate the strategy of recovering $s_1$, and the rest of the secret coordinates can be recovered in the same way. Let $m' = \widetilde{O}(1/\gamma^2)m$, given an LWE sample $(\textbf{A},\,\textbf{b})$ where $\textbf{A}\in \ZZ_q^{m'\times n},\textbf{b} \in \ZZ_q^{m'}$, we compute a pair $\bf (A',\,b')$ as follows:
\[\textbf{A}' = \textbf{A} +' \textbf{c},\,\,\, \textbf{b}' = \textbf{b}.\]

Here $\textbf{c} \in \ZZ_q^{m'}$ is sampled uniformly and the symbol $``+' "$ means that we are adding $\textbf{c}$ to the first column of $\textbf{A}$. One verifies by the definition of LWE that if $s_1 = 0$, then the pair $(\textbf{A}',\, \textbf{b}')$ would be LWE samples with the same error distribution. Otherwise, the pair $(\textbf{A}',\, \textbf{b}')$ would be uniformly random in $\ZZ_q^{m'\times n}\times \ZZ_q^{m'}$. We then feed the pair $(\textbf{A}',\, \textbf{b}')$ to the $(T,\gamma)$-distinguisher for $\lwe_{n,m,q,\chi}$, and we need to running the distinguisher $m'/m = \widetilde{O}(1/\gamma^2)$ times given the number of instances. Since the advantage of this distinguisher is $\gamma$ with $m$ LWE instances, and we are feeding it $m'$ LWE instances, it follows from the Chernoff bound that if the majority of the outputs are ``LWE", then the pair $\bf (A', b')$ is an LWE sample and therefore $s_1 = 0$. If not, $s_1 = 1$. Guessing one coordinate requires running the distinguisher $ \widetilde{O}(1/\gamma^2)$ times, therefore, this search to reduction algorithm takes time $T' = \widetilde{O}(Tn/\gamma^2)$. Note that we can use the same $m'$ LWE instances for each coordinate, therefore it requires $m' = \widetilde{O}(m/\gamma^2)$ samples to recover all the secret coordinates.
\end{proof}

\section{Additional Modular Arithmetic Results (\S\ref{sec:results_1D})}
\label{sec:appx_addl_1D}

\begin{center}
\begin{table}[th]

\centering
\begin{tabular}{ll|ll}
\toprule
\textbf{$\ceil{\log_2(q)}$} & \textbf{$q$} & \textbf{$\ceil{\log_2(q)}$} & \textbf{$q$} \\ \midrule
5 & 19, 29 & 18 & 147647, 222553 \\
6 & 37, 59 & 19 & 397921, 305423 \\
7 & 67, 113 & 20 & 842779, 682289 \\
8 & 251, 173 & 21 & 1489513, 1152667 \\
9 & 367, 443 & 22 & 3578353, 2772311 \\
10 & 967, 683 & 23 & 6139999, 5140357 \\
11 & 1471, 1949 & 24 & 13609319, 14376667 \\
12 & 3217, 2221 & 25 & 31992319, 28766623 \\
13 & 6421, 4297 & 26 & 41223389, 38589427 \\
14 & 11197, 12197 & 27 & 94056013, 115406527 \\
15 & 20663, 24659 & 28 & 179067461, 155321527 \\
16 & 42899, 54647 & 29 & 274887787, 504470789 \\
17 & 130769, 115301 & 30 & 642234707, 845813581 \\ \bottomrule
\end{tabular}

\vspace{0.1cm}
\caption{\small $q$ values used in our experiments} 
\label{tab:Qs}
\end{table}
\end{center}

Here, we provide additional information on our single and multidimensional modular arithmetic experiments from \S\ref{sec:1D_method}. Before presenting experimental results, we first highlight two useful tables. Table~\ref{tab:Qs} shows the $q$ values used in our integer and multi-dimension modular arithmetic problems. Table~\ref{tab:app_1Dspeed} is an expanded version of Table~\ref{tab:1Dspeed} in the main paper body. It shows how the $\log_2$ samples required for success changes with the base representation for the input/output, but includes additional values of base $B$ (secret is fixed at $728)$.

\begin{table}[h]
\centering
\begin{tabular}{ccccccccccccccc}\toprule
\multicolumn{1}{l}{\multirow{2}{*}{$\boldsymbol{\ceil{\log_2(q)}}$}} & \multicolumn{13}{c}{\textbf{Base}} \\ \multicolumn{1}{l}{} & 2 & 3 & 4 & 5 & 7 & 17 & 24 & 27 & 30 & 31 & 63 & \bf{81} & 128 \\ \midrule
15 & 23 & 21 & 21 & 23 & 22 & 20 & 20 & 23 & 22 & 21 & 21 & \textbf{20} & 20 \\
16 & 24 & 22 & 23 & 22 & 22 & 23 & 22 & 22 & 22 & 23 & 22 & \textbf{22} & 21 \\
17 & - & 23 & 24 & 25 & 22 & 26 & 23 & 24 & 22 & 24 & 23 & \textbf{22} & 22 \\
18 & - & 23 & 23 & 25 & 23 & - & 23 & 24 & 25 & - & 23 & \textbf{22} & 22 \\
19 & - & 23 & - & - & 25 & 23 & 25 & 24 & - & - & 25 & \textbf{25} & 24 \\
20 & - & - & - & - & - & 24 & 25 & 24 & 26 & - & - & \textbf{24} & 25\\
21 & - & 24 & - & - & 25 & - & - & - & - & - & - & - & 25 \\
22 & - & - & - & - & - & - & - & 25 & - & 26 & - & - & 25 \\
23 & - & - & - & - & - & - & - & - & - & - & 25 & - & - \\
24 & - & - & - & - & - & - & - & - & - & - & - & - & - \\ \bottomrule
\end{tabular}
\vspace{0.1cm}
\caption{\small Base-2 logarithm of the number of examples needed to reach $95\%$ accuracy, for different values of $\ceil{\log_2(q)}$ and bases.}
\label{tab:app_1Dspeed}
\end{table}

\para{Base vs. Secret.} We empirically observe that the base $B$ used for integer representation in our experiments may provide side-channel information about the secret $s$ in the 1D case. For example, in Table~\ref{tab:1D_baseSecret}, when the secret value is $729$, bases $3$, $9$, $27$, $729$ and $3332$ all enable solutions with much higher $q$ (8 times higher than the next highest result). Nearly all these are powers of 3\footnote{And $3332$ can easily be written out as a sum of powers of $3$, e.g. $3332=3^8-3^7-3^6-3^5-3^4+3^2+3-1$.} as is the secret $729=3^6$. In the table, one can see that these same bases provide similar (though not as significant) ``boosts" in $q$ for secrets on either side of $729$ (e.g. $728, 730$), as well as for $720 = 3^6-3^2$. Based on these results, we speculate that when training on ($a,b$) pairs with an unknown secret $s$, testing on different bases and observing model performance may allow some insight into $s$'s prime factors. More theoretical and empirical work is needed to verify this connection.

\begin{table}[h]
\centering
\begin{tabular}{cccccccccccc}\toprule
\multirow{2}{*}{\textbf{Base}} & \multicolumn{11}{c}{\textbf{Secret value}} \\ \cmidrule{2-12}
 & 720 & 721 & 722 & 723 & 724 & 725 & 726 & 727 & 728 & 729 & 730 \\ \midrule
$2$ & - & 18 & 16 & - & - & - & 16 & 16 & - & 16 & - \\
$3$ & 19 & 16 & 18 & 16 & 18 & 18 & 19 & 18 & 21 & \textbf{24} & 20 \\
$4$ & 18 & 18 & 18 & 18 & 18 & - & - & - & 18 &  & 18 \\
$5$ & 18 & 17 & 16 & 16 & 16 & 18 & 18 & 17 & 16 & 19 & 16\\
$7$ & - & 18 & - & 18 & - & 20 & - & - & 19 & 18 & - \\
$9$ & \textbf{23} & 18 & 18 & 18 & 18 & 18 & 18 & 21 & 18 & \textbf{24} & \textbf{23}\\
$11$ & 20 & 20 & - & 19 & 21 & 21 & 21 & 20 & - & - & 19 \\
$17$ & 18 & 18 & - & 19 & 19 & 18 & 18 &  & 20 & 20 & - \\
$27$ & \textbf{23} & - & \textbf{23} & 18 & 18 & 18 & 21 & 22 & 21 & \textbf{24} & 22 \\
$28$ & - & 20 & 18 & - & 20 & 18 & - & 19 & \textbf{23} & 18 & - \\
$49$ & 18 & 22 & 18 & 19 & 21 & 18 & - & 18 & 19 & 18 & 18  \\
$63$ & 20 & 21 & 18 & 20 & 19 & 19 & 18 & 19 & 18 & - & -  \\
$128$ & 20 & - & 18 & 22 & 20 & - & 19 & 18 & 19 & 19 & 19 \\
$729$ & 18 & 20 & 19 & 18 & 19 & 21 & 19 & 19 & 18 & \textbf{25} & 18 \\
$3332$ & 22 & 22 & 22 & \textbf{23} & 22 & 22 & 21 & 22 & \textbf{23} & \textbf{23} & 22 \\ \bottomrule
\end{tabular}
\vspace{0.1cm}
\caption{\small Relationship between base and secret. Numbers in table represent the highest $\log_2(q)$ value achieved for a particular base/secret combo. Values of $\log_2(q)>=23$, indicating high performance, are {\bf bold}.}
\label{tab:1D_baseSecret}

\end{table}

\vspace{-0.35cm}

\para{Ablation over transformer hyper-parameters.} We provide additional experiments on model architecture, specifically examining the effect of model layers, optimizer, embedding dimension and batch size on integer modular inversion performance. Tables~\ref{tab:appx_1D_num_layers}-\ref{tab:appx_1D_batch_size} show ablation studies for the 1D modular arithmetic task, where entries are of the form (best $\log_2(q)$/$\log_2(samples)$), e.g. the highest modulus achieved and the number of training samples needed to achieve this. The best results, meaning the highest $q$ with the lowest $\log_2(samples)$, are in {\bf bold}. For all experiments, we use the base architecture of 2 encoder/decoder layers, 512 encoder/decoder embedding dimension, and 8/8 attention heads (as in Section~\ref{sec:1D_method}) and note what architecture element changes in the table heading. 

We find that shallow transformers (e.g 2 layers, see Table~\ref{tab:appx_1D_num_layers}) work best, allowing to solve problems with a much higher $q$ especially when the base $B$ is large. The AdamCosine optimizer (Table~\ref{tab:appx_1D_optimizer}) usually works best, but requires smaller batch sizes for success with large bases. For small bases, a smaller embedding dimension of 128 performs better (Table~\ref{tab:appx_1D_emb_dim}), but increasing base size and dimension simultaneously yield good performance. Results on batch size (Table~\ref{tab:appx_1D_batch_size}) do not show a strong trend.

\begin{table}[h]
\begin{center}
\begin{minipage}[b]{0.32\linewidth}

\centering
    \resizebox{1.0\textwidth}{!}{
    \begin{tabular}{cccc}\toprule
    \multirow{2}{*}{\textbf{Base}} & \multicolumn{3}{l}{\textbf{\# Transformer Layers}} \\ \cmidrule{2-4}
     & 2 & 4 & 6 \\ \midrule
    27 & 19/24 & 18/27 & 20/25 \\
    63 & 18/25 & 16/25 & 15/22 \\
    3332 & \textbf{23/26} & 23/-- & 18/22 \\ \bottomrule
    \end{tabular}
    }
        \vspace{0.1cm}
    \caption{\small 1D case: Ablation over number of transformer layers.}
    \label{tab:appx_1D_num_layers}

\end{minipage}\hfill%
\begin{minipage}[b]{0.66\linewidth}
\centering
    \resizebox{1.1\textwidth}{!}{
    \begin{tabular}{cccc}\toprule
    \multirow{2}{*}{\textbf{Base}} & \multicolumn{3}{c}{\textbf{Optimizer}} \\ \cmidrule{2-4}
     & Adam (0, $5e^{-5}$) & Adam ($3000, 5e^{-5}$) & AdamCosine ($3000, 1e^{-5}$) \\ \midrule
    27 & 18/26 & 19/24 & \textbf{22/27} \\
    3332 & \textbf{ 23/26} & 23/27 & 22/26 \\
    3332* & 23/26 & 23/26 & \textbf{ 23/25} \\ \bottomrule
    \end{tabular}
    }
    \vspace{0.1cm}
    \caption{\small 1D case: Ablation over optimizers. Parenthetical denotes (\# warmup steps, learning rate); * = batch size 128.}
    \label{tab:appx_1D_optimizer}

\end{minipage}\qquad\vspace{0.1cm}
\begin{minipage}[h]{0.48\linewidth}
\centering
    \begin{tabular}{ccccc}\toprule
    \multirow{2}{*}{\textbf{Base}} & \multicolumn{4}{c}{\textbf{Embedding Dimension}} \\ \cmidrule{2-5}
     & 512 & 256 & 128 & 64 \\ \midrule
    3 & 21/25 & 21/24 & \textbf{22/26} & 19/- \\
    27 & 23/26 & 23/26 & \textbf{23/25} & 19/26 \\
    63 & \textbf{ 23/27} & 18/24 & 19/27 & 18/26 \\
    3332 & \textbf{23/25} & 23/26 & 23/26 & 23/27 \\ \bottomrule
    \end{tabular}
        \vspace{0.1cm}
        \caption{\small 1D case: Ablation over embedding.}
    \label{tab:appx_1D_emb_dim}
    \vspace{-0.25cm}
\end{minipage}\hfill
\begin{minipage}[h]{0.48\linewidth}
\centering
    \begin{tabular}{llllll} \toprule
    \multirow{2}{*}{\textbf{Base}} & \multicolumn{5}{c}{\textbf{Batch size}} \\ \cmidrule{2-6}
     & 64 & 96 & 128 & 192 & 256 \\ \midrule
    3 & 21/26 & 21/25 & 21/26 & 22/26 & \textbf{23/26} \\
    27 & 21/25 & \textbf{24/27} & 22/27 & 23/26 & 24/28 \\
    63 & - & 20/27 & \textbf{23/25} & - & 23/26 \\
    3332 & 23/26 & 23/26 & \textbf{23/25} & \textbf{23/25} & 23/26 \\ \bottomrule
    \end{tabular}
        \vspace{0.1cm}
       \caption{\small 1D case: Ablation over training batch size.}
    \label{tab:appx_1D_batch_size}
    \vspace{-0.25cm}
 
\end{minipage}
\end{center}
\end{table}

\vspace{-0.25cm}

\section{Additional information on SALSA Secret Recovery (\S\ref{subsec:secret_guess})}
\label{sec:appx_secret_guess}

\subsection{Direct Secret Recovery}

\para{Recovering Secrets from Predictions.} In the direct secret recovery phase, the model predicts, for each value of K, $n$ sequences representing integers in base $B$ (one for each special ${\bf a}$ input). They are decoded as $n$ integers, and concatenated as a vector $\tilde{\bf s}$. The function $binarize$, on line $7$ of Algorithm~\ref{alg:direct_secret}, then predicts the binary secret from $\tilde{\bf s}$. $binarize$ outputs six predictions, using three methods: mean, softmax-mean, and mode comparison. 
\begin{packed_itemize}
\item The mean comparison method takes the mean of the coordinates of $\tilde{\bf s}$ and computes \textit{two} potential secrets: $f_{01}(\tilde{\bf s})$ where all coordinates above the mean as set to $0$ and all below the mean to $1$, and $f_{10}(\tilde{\bf s})$ where all coordinates above the mean are set to $1$ and all below the mean to $0$.
\item In the softmax-mean comparison method, we apply a softmax function to $\tilde{\bf s}$ before using the mean comparison method, to obtain two secret predictions. 
\item The mode comparison method uses the mode (the most common value) of $\tilde{\bf s}$ instead of the mean. 
\end{packed_itemize}

Altogether, these binarization methods produce six secret guesses. In our SALSA evaluation, all of these are compared against the true secret $\bf s$, and the number of matching bits is reported. If $\tilde{\bf s}$ fully matches $\bf s$, model training is stopped. When $\bf s$ is not available for comparison, the methods in \S\ref{sec:check_secrets} can be used to verify $\tilde{\bf s}$'s correctness.

\para{K values.} At the end of each epoch, we use $10$ $K$ values for direct secret guessing, $5$ of which are fixed and $5$ of which are randomly generated. The fixed $K$ values are $K = [239145, 42899, q-1, 3q+7, 42900]$, while the random $K$ values are chosen from the range $(q, 10q)$.

\subsection{Distinguisher-Based Secret Recovery}

Here, we provide more details on the parameters and subroutines used in Algorithm~\ref{alg:distinguisher}. 
\begin{packed_itemize}
\item $\tau$: This parameter sets the bound on $q$ that will be used for the distinguisher computation. In our experiments, we set $\tau=0.1$.
\item $acc_{\tau}$: This denotes the distinguisher advantage. Let $acc_{\tau}$ denote the proportion of model predictions which fall within the chosen tolerance (e.g. accuracy within tolerance as described in \S\ref{subsec:model_train}), then $advantage$ = $acc_{\tau} - 2*\tau$.
\item $LWESamples(t,n,q)$ is a subroutine that returns LWE samples $({\bf A, b})\in \ZZ_q^{n\times t}\times \ZZ_q^t$, note that now columns of ${\bf A}$ corresponds to LWE instances.
\end{packed_itemize}

\subsection{Secret Recovery in Practice}
\label{sec:sec_histogram}

\begin{figure}[h]
    \centering
    \includegraphics[width=0.4\textwidth]{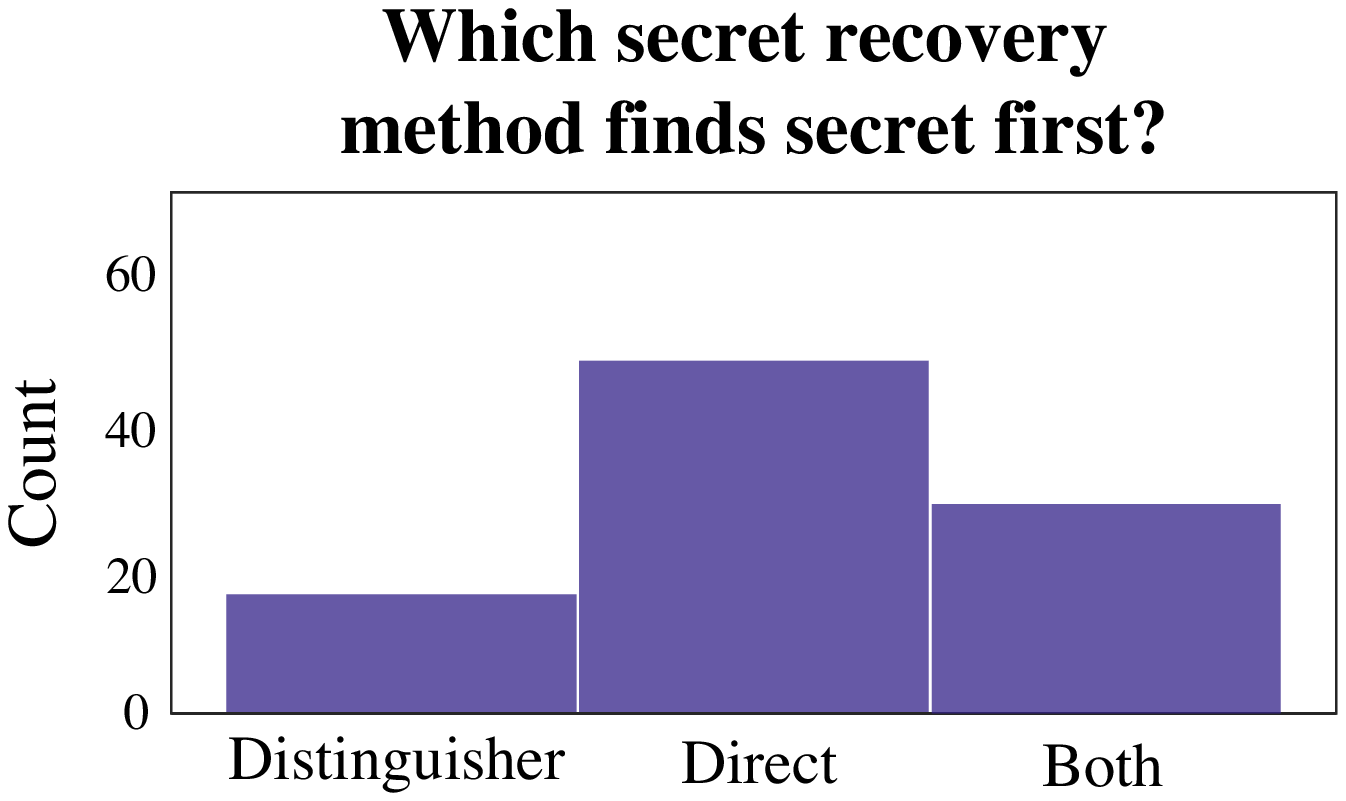}
    \caption{\small Frequency counts of which secret recovery method succeeds first for $90$ successful SALSA runs.}
    \label{fig:secret_histogram}
\end{figure}

Empirically, we observe that our direct secret guessing recovers the secret more quickly than the distinguisher-based method. Figure~\ref{fig:secret_histogram} plots the number of times each technique succeeded over $120$ SALSA runs with varying $n$ and $d$. The direct secret guessing method succeeds in $80\%$ of cases. Occasionally ($20\%$), the distinguisher finds the secret first. Both methods simultaneously succeed in $30\%$ of the cases.


\section{SALSA Architecture experiments}
\label{sec:appx_2D_arch}

Several key architecture choices determine SALSA's ability to recover secrets with higher $n$ and $d$, namely the encoder and decoder dimension as well as the number of attention heads. Other architecture choices determine the time to solution but not the complexity of problems SALSA could solve. For example, universal transformers (UT) are more sample efficient than regular transformers. Using gated loops in the UT with more loops on the decoder than the encoder reduced both model training time and the number of samples needed. Here, we present ablation results for all these architectural choices.

\vspace{-0.25cm}

\begin{figure*}[h]
\begin{minipage}[c]{0.30\textwidth}
    \vspace{0.35cm}
    \centering
    \resizebox{1\textwidth}{!}{
    \begin{tabular}{cccc}\toprule
    \multirow{2}{*}{\textbf{\begin{tabular}[c]{@{}c@{}}Encoder\\ Loops/Layers\end{tabular}}} & \multicolumn{3}{c}{\textbf{\begin{tabular}[c]{@{}c@{}}Decoder\\ Loops/Layers\end{tabular}}} \\
     & 2 & 4 & 8\\ \midrule
    2 & 1.2 & 4.7 & 0.8 \\
    4 & 0.7 & 0.4 & 0.6 \\
    8 & 0.1 & 0.1 & 0.1 \\ \bottomrule
    \end{tabular}
    }
    \captionof{table}{\small {\bf Transformers vs UTs.} Ratio of training samples required for success for UTs with X/X encoder/decoder loops vs regular transformers with X/X encoder/decoder layers.}
    \label{tab:2D_RT_UT}
\end{minipage}\hfill
\begin{minipage}[c]{0.32\textwidth}
     \centering
     \small
     \begin{tabular}{cccc}\toprule
     
    \multirow{2}{*}{\textbf{\begin{tabular}[c]{@{}c@{}}Encoder\\ Loops\end{tabular}}} & \multicolumn{3}{c}{\textbf{ Decoder Loops}} \\ 
     & \textbf{2} & \textbf{4} & \textbf{8} \\ \midrule
    2 & 1.0 & 1.3 & 0.3 \\
    4 & 0.3 & 0.3 & 0.3 \\
    8 & 0.1 & 0.1 & 0.1 \\ \bottomrule
    \end{tabular}
    \captionof{table}{\small \textbf{Gated vs Ungated UTs.} Ratio of training samples required for success for gated UTs with X/X encoder/decoder loops vs ungated UTs with same loop numbers.}
    \label{fig:2D_gated_ungated}
    \end{minipage}\hfill
\begin{minipage}[c]{0.32\textwidth}
     \centering
     \small
     \begin{tabular}{cccc}\toprule
     
    \multirow{2}{*}{\textbf{\begin{tabular}[c]{@{}c@{}}Encoder\\Loops\end{tabular}}} & \multicolumn{3}{c}{\textbf{Decoder Loops}} \\
     & \textbf{2} & \textbf{4} & \textbf{8} \\ \midrule
    \multicolumn{1}{c} 2 & 23.5 & 25.4 & 23.4 \\
    \multicolumn{1}{c} 4 & 23.3 & 24.2 & 24.4 \\
    \multicolumn{1}{c} 8 & 23.1 & 22.3 &  22.5 \\ \bottomrule
    \end{tabular}
     \captionof{table}{\small {\bf Loops.} Average $\log_2$ of training samples required for $N=50$, $h=3$, $q=251$, $base_{in}/base_{out}=81$ as loops vary.}
    \label{fig:appx_compare_loops}
\end{minipage}
\end{figure*}
\vspace{-0.25cm}

\para{Universal Transformers vs. Regular Transformers.} First, we compare universal transformers (UT) with ``regular'' transformers. We run dueling experiments on medium size problems ($N=50$, $d=0.06$, $q=251$, $B_{in}/B_{out}=81$), comparing gated UT and 2 to 8 loops to regular transformers with is many layers as the UT has loops. For each pair of experiment, we measure the ratio of the number of training samples needed to recover the secret. As Table~\ref{tab:2D_RT_UT} shows, universal transformers prove more sample efficient as model size increases. We use them exclusively.

\para{Gated vs Ungated UTs.} To understand the effect of gating on sample efficiency, we run two experiments with medium-size problems ($N=50$, $d=0.06$, $q=251$, $B_{in}/B_{out}=81$), with gated and ungated universal transformers with 2 to 8 loops in the encoder  and decoder. Table~\ref{fig:2D_gated_ungated} proves that gated UTs are much more sample-efficient.

\para{Number of Loops.} Table~\ref{fig:appx_compare_loops} reports the logarithm of the average number of samples needed to recover the secret for $N=50$ and $h=3$, for various numbers of loops in the encoder and decoder (dimensions=1025/512, heads=16/4). 8/4 and 8/8 loops prove more efficient, but because training time increases steeply with the number of loops in the encoder (which processes longer sequences), we kept $2$ encoder and $8$ decoder loops in our experiments.

\para{Encoder/Decoder Dimension.} Table~\ref{tab:layers_loops} in \S\ref{subsec:salsa_results} show that larger encoders and smaller decoders improve sample efficiency. Here, we explore how encoder and decoder dimension impact the size of the secrets (in terms of dimension $n$ and hamming weight) that SALSA can recover. Our results, shown in Tables~\ref{tab:appx_enc_dim} and~\ref{tab:appx_dec_dim}, follow the same pattern as before: large encoders and small decoder allow for the recovery of secrets of higher dimension $n$. Furthermore, for $n=30$, larger encoders and/or smaller decoders allow for the recovery of secrets with hamming weight $4$. 

\begin{table}[h]
\centering
\begin{tabular}{ccccc?cccc}\toprule
\multirow{2}{*}{$n$} & \multicolumn{4}{c?}{\begin{tabular}[c?]{@{}c@{}}\textbf{Encoder Dimension}\\ (hamming=3)\end{tabular}} & \multicolumn{4}{c}{\begin{tabular}[c]{@{}c@{}}\textbf{Encoder Dimension}\\ (hamming=4)\end{tabular}} \\ 
 & 512 & 1024 & 2048 & 3040 & 512 & 1024 & 2048 & 3040 \\ \midrule
30 &\cellcolor{lightgreen!50} 1.0 & \cellcolor{lightgreen!50} 1.0 &\cellcolor{lightgreen!50} 1.0 &\cellcolor{lightgreen!50} 1.0 & \cellcolor{lred!80} 0.87 &\cellcolor{lightgreen!50} 1.0 & \cellcolor{lightgreen!50} 1.0 & \cellcolor{lightgreen!50} 1.0 \\

50 & \cellcolor{lightgreen!50}1.0 &\cellcolor{lightgreen!50} 1.0 & \cellcolor{lightgreen!50} 1.0 & \cellcolor{lightgreen!50} 1.0 & \cellcolor{lyellow!80}0.94 &\cellcolor{lyellow!80} 0.94 & \cellcolor{lyellow!80}0.94 & \cellcolor{lyellow!80}0.94 \\

70 & \cellcolor{lyellow!80} 0.97 &\cellcolor{lightgreen!50} 1.0 & \cellcolor{lightgreen!50}1.0 & \cellcolor{lightgreen!50}1.0 &\cellcolor{lyellow!80} 0.96 & \cellcolor{lyellow!80}0.97 & \cellcolor{lyellow!80}0.94 & 0\cellcolor{lyellow!80}.96  \\

90 & \cellcolor{lred!80}0.97 & \cellcolor{lyellow!80}0.98 &\cellcolor{lightgreen!50} 1.0 & \cellcolor{lightgreen!50}1.0 & \cellcolor{lred!80} 0.96 & \cellcolor{lyellow!80}0.96 & \cellcolor{lyellow!80}0.97 & \cellcolor{lyellow!80}0.97  \\ \bottomrule
\end{tabular}

\vspace{0.1cm}
\caption{\small{\bf Ablation over Encoder Dimension.} Proportion of secret bits recovered for varying $n$ and encoder dimension. For all experiments, we fix decoder dimension to be 512, 2/2 layers, 2/8 loops. {\color{lightgreen} Green} means secret was guessed, {\color{lyellow} yellow} means all $1$s, but not all $0$s, were guessed, and {\color{lred} red} means SALSA failed.}
\label{tab:appx_enc_dim}

\end{table}

\begin{table}[h]
\centering

\begin{tabular}{ccccc?cccc}\toprule
\multirow{2}{*}{$n$} & \multicolumn{4}{c?}{\begin{tabular}[c]{@{}c@{}}\textbf{Decoder Dimension}\\ (hamming=3)\end{tabular}} & \multicolumn{4}{c}{\begin{tabular}[c]{@{}c@{}}\textbf{Decoder Dimension}\\ (hamming=4)\end{tabular}} \\ 
 &  256 & 768 & 1024 & 1536 &  256 & 768 & 1024 & 1536 \\ \midrule
30 & \cellcolor{lightgreen!50}1.0 & \cellcolor{lightgreen!50}1.0 & \cellcolor{lightgreen!50}1.0 & \cellcolor{lightgreen!50}1.0 &  \cellcolor{lightgreen!50}1.0 & \cellcolor{lightgreen!50}1.0 & \cellcolor{lyellow!80}0.90 &\cellcolor{lyellow!80} 0.87 \\

50 & \cellcolor{lightgreen!50}1.0 & \cellcolor{lightgreen!50}1.0 & \cellcolor{lightgreen!50}1.0 & \cellcolor{lyellow!80}0.94 & \cellcolor{lyellow!80}0.94 & \cellcolor{lred!80}0.92 &\cellcolor{lyellow!80} 0.92 & \cellcolor{lyellow!80}0.92 \\

70 & \cellcolor{lightgreen!50}1.0 & \cellcolor{lightgreen!50}1.0 & \cellcolor{lightgreen!50}1.0 & \cellcolor{lyellow!80}0.96 &\cellcolor{lyellow!80} 0.96 & \cellcolor{lred!80}0.94 & \cellcolor{lyellow!80}0.94 & \cellcolor{lyellow!80}0.94  \\

90 & \cellcolor{lightgreen!50}1.0 & \cellcolor{lyellow!80}0.97 &\cellcolor{lyellow!80} 0.97 &\cellcolor{lyellow!80} 0.97 & \cellcolor{lyellow!80} 0.97 & \cellcolor{lred!80}0.96 & \cellcolor{lyellow!80}0.97 & \cellcolor{lred!80} -  \\ \bottomrule
\end{tabular}

\vspace{0.1cm}
\caption{\small {\bf Ablation over Decoder Dimension.} Proportion of secret bits recovered for varying $n$ and encoder dimension. For all experiments, we fix encoder dimension to be 1024, 2/2 layers, 2/8 loops. {\color{lightgreen} Green} means secret was guessed, {\color{lyellow} yellow} means all $1$s, but not all $0$s, were guessed, and {\color{lred} red} means SALSA failed.}
\label{tab:appx_dec_dim}

\end{table}

\para{Attention Heads.} In these experiments, we train universal transformers with 2 encoder/decoder layers, 1024/512 embedding dimension, 2/8 encoder/decoder loops, with different number of attentions heads, on problems of different dimensions $n$, and measure SALSA secret recovery rate. Increasing the number of attention heads in the encoder, and reducing it to $4$ in the decoder allows SALSA to recover secrets for $n=90$ (Table~\ref{tab:heads_success}), although it slightly increases the number of samples needed for recovery (Table~\ref{tab:heads_samples}). Increasing the number of decoder heads increases the number of samples needed but does not provide the same scale-up for $n=90$.

\begin{table}[h]
\centering
\begin{tabular}{ccccccccc}\toprule
\multirow{2}{*}n & \multicolumn{8}{c}{\textbf{Encoder/Decoder Heads}} \\
 & 8/8 & 16/4 & 16/8 & 16/16 & 32/4 & 32/8 & 32/16 & 32/32 \\ \midrule
30 & \cellcolor{lightgreen!50} 1.0 & \cellcolor{lightgreen!50} 1.0 & \cellcolor{lightgreen!50} 1.0 &\cellcolor{lightgreen!50} 1.0 & \cellcolor{lightgreen!50} 1.0 & \cellcolor{lightgreen!50} 1.0 & \cellcolor{lightgreen!50} 1.0 & \cellcolor{lightgreen!50} 1.0  \\

50 & \cellcolor{lightgreen!50} 1.0 & \cellcolor{lightgreen!50} 1.0 & \cellcolor{lightgreen!50} 1.0 & \cellcolor{lightgreen!50} 1.0 & \cellcolor{lightgreen!50} 1.0 & \cellcolor{lightgreen!50} 1.0 & \cellcolor{lightgreen!50} 1.0 & \cellcolor{lightgreen!50} 1.0  \\

70 & \cellcolor{lightgreen!50} 1.0 & \cellcolor{lightgreen!50} 1.0 & \cellcolor{lightgreen!50} 1.0 & \cellcolor{lightgreen!50} 1.0 & \cellcolor{lightgreen!50} 1.0 & \cellcolor{lightgreen!50} 1.0 & \cellcolor{lightgreen!50} 1.0 & \cellcolor{lightgreen!50} 1.0  \\

90 & \cellcolor{lyellow!80} 0.97 & \cellcolor{lightgreen!50} 1.0 &\cellcolor{lyellow!80}  0.97 & \cellcolor{lyellow!80} 0.99 & \cellcolor{lightgreen!50} 1.0 & \cellcolor{lyellow!80} 0.98 & \cellcolor{lyellow!80} 0.98 & \cellcolor{lyellow!80} 0.97 \\ \bottomrule 
\end{tabular}
\vspace{0.1cm}
\caption{\small {\bf  Attention Heads: Effect on secret recovery.} Table of success for varying $n$ with hamming 3 for encoder/decoder head combinations. {\color{lightgreen} Green} means secret was guessed, {\color{lyellow} yellow} means all $1$s, but not all $0$s, were guessed.}
\label{tab:heads_success}

\end{table}

\begin{table}[h]
\centering

\begin{tabular}{c|c|c}\toprule
\multicolumn{3}{c}{\begin{tabular}[c]{@{}c@{}} \textbf{Attention Heads}\\ (1024/512, X/X, 2/8)\end{tabular}} \\ \midrule
8/8 & 16/4,8,16 & 32/4,8,16,32 \\ \midrule
22.4 & 22.8, 22.9, 23.2 & 23.0, 23.1, 23.7, 24.7 \\ \bottomrule
\end{tabular}
\vspace{0.1cm}
\caption{\small {\bf Attention Heads: Effect on $\log_2$ samples.} We test the effect of attention heads and report the $\log_2$ samples required to recover the secret in each setting. Experiments are run with $n=50$, hamming $3$.}
\label{tab:heads_samples}
\end{table}

\end{document}